\theoremstyle{plain}
  \newtheorem{lem}{\protect\lemmaname}
\theoremstyle{plain}
\newtheorem{thm}{\protect\theoremname}
\theoremstyle{plain}
  \theoremstyle{remark}
  \newtheorem{rem}{\protect\remarkname}
\providecommand{\lemmaname}{Lemma}
\providecommand{\remarkname}{Remark}
\providecommand{\corollaryname}{Corollary}
\providecommand{\theoremname}{Theorem}
\begin{document}

\global\long\def\ALG{\text{\normalfont ALG}}
\global\long\def\OPT{\text{\normalfont OPT}}
\global\long\def\PACK{\text{\normalfont P}}
\global\long\def\E{\text{\normalfont E}}
\global\long\def\supp{\text{\normalfont supp}}

\title{Online Multidimensional Packing Problems in the Random-Order Model}

\author{
David Naori\thanks{Computer Science Department, Technion, Israel. Email:\texttt{\{dnaori,danny\}@cs.technion.ac.il}.} 
\and Danny Raz\footnotemark[1]
}
\date{}

\maketitle

\begin{abstract}
    We study online multidimensional variants of the generalized assignment problem which are used to model prominent real-world applications, such as the assignment of virtual machines with multiple resource requirements to physical infrastructure in cloud computing.
    These problems can be seen as an extension of the well known secretary problem and thus the standard online worst-case model cannot provide any performance guarantee. The prevailing model in this case is
    the random-order model, which provides a useful realistic and robust alternative. Using this model, we study the $d$-dimensional generalized assignment problem, where we introduce a novel technique that achieves an $O(d)$-competitive algorithms and prove a matching lower bound of $\Omega(d)$. Furthermore, our algorithm improves upon the best-known competitive-ratio for the online (one-dimensional) generalized assignment problem and the online knapsack problem. 
\end{abstract}

\newpage

\section{Introduction}
Online multidimensional packing problems appear in a wide verity of real-world applications. A recent relevant example is the assignment of virtual elements to the physical infrastructure in Network Function Virtualization (NFV) and cloud computing. Typically, in these problems, we are given a set of bins, each with a certain capacity profile, then, items arrive one-by-one in an online fashion, each with a certain size and profit. Upon each arrival, one has to decide immediately and irrevocably whether and where to pack the current item. The goal is to find an assignment that maximizes the total profit without exceeding the capacity of any bin. These problems can be viewed as generalizations of the well-known secretary problem, in which we have a single bin, and every secretary consumes the capacity of the whole bin.

The common way of analyzing online algorithms is to use the worst-case model,
where an adversary picks an instance along with the order in which items are revealed to the online algorithm.
Despite its prevalence in the analysis of online algorithms, this setting is too pessimistic for the problem at hand. Indeed, no online algorithm can achieve any non-trivial worst-case competitive-ratio, even for the simple case of the secretary problem, as shown by Aggarwal et al.~\cite{DBLP:conf/soda/AggarwalGKM11}. A more realistic model is the random-order model 
in which the power of choosing the arrival order of items is taken away from the adversary, instead, 
the arrival order is chosen uniformly at random.
In this model, we say that an algorithm $\ALG$ is $c$-competitive if for every input instance $\mathcal{I}$ it holds that,
$ c \cdot \E[\ALG(\mathcal{I})] \geq   \OPT(\mathcal{I})$, where the expectation is taken over the random arrival orders and the randomness of the algorithm.\footnote{We follow the definition used in
\cite{DBLP:conf/approx/BabaioffIKK07, DBLP:conf/soda/BabaioffIK07, DBLP:conf/esa/KesselheimRTV13} although it is also common to refer to such algorithm as $1/c$-competitive.}

Kesselheim et al.~\cite{DBLP:conf/esa/KesselheimRTV13, DBLP:journals/siamcomp/KesselheimRTV18} generalized the known optimal algorithm for the secretary problem to various packing problems in the random-order model. The outline of the generic algorithmic framework is as follows: it starts with a sampling phase in which the algorithm only observes the arriving items. Then, at every subsequent online round, the algorithm computes a local solution for the sub-instance consists of all the items that arrived so far. If the bin in which the current item is packed in this local solution has enough free capacity (i.e., an assignment of the current item in this bin is feasible) the algorithm carries it out, otherwise, it leaves the item unpacked.
Using this framework, 
Kesselheim et al.~\cite{DBLP:journals/siamcomp/KesselheimRTV18} presented an algorithm for the online generalized assignment problem (GAP) with the best-known competitive-ratio (prior to this work).

In GAP we have a set of bins and a set of items.  Each bin has a certain non-negative capacity and each item has several packing options, one for each bin. Each packing option is associated with a certain consumption from the capacity of the bin and a certain profit it provides. The goal is to
pack the items in the bins where each item can be packed at most once, maximizing the total profit without exceeding the capacity of any bin.
A major challenge in online packing problems, and online GAP in particular, is to handle both
items with high consumption of resources compared to a bin capacity, as well as items with low consumption.
Kesselheim et al. handle this challenge by partitioning a GAP instance into two sub-instances: the first contains all ``heavy'' packing options of items, that is, packing options that occupy more than half of a bin capacity, the second is the complementary sub-instance that contains all ``light'' packing options. Their algorithm makes an initial random choice to operate on one of the sub-instances exclusively. Although it achieves the best-known results, this behaviour is undesirable for most applications, since it always leaves one type of items unpacked.

We use a similar algorithmic framework to design an online algorithm for the $d$-dimensional generalization of online GAP, or online Vector Generalized Assignment Problem (VGAP), in which the capacity profile of each bin, as well as the consumption of items from each bin, is described by a $d$-dimensional vector. The goal remains to maximize the profit, while the capacity of each bin must not be exceeded in any of its $d$ dimensions. To the best of our knowledge, this is the first time the online version of this problem is studied. Our algorithm offers a preferable behaviour and improves upon the best-known competitive-ratio for online GAP. To achieve this, we take a different approach to overcome the challenge: instead of limiting the algorithm to either ``heavy'' or ``light'' packing options, our algorithm considers them both. It operates in three phases: a sampling phase, a phase for ``heavy'' packing options, and a phase for ``light'' packing options. To compute the tentative assignments, our algorithm in the second phase uses maximum-weight bipartite matching, and in the third phase, it uses an optimal fractional solution for the LP-relaxation of the local problem, and randomized rounding. 

We also apply our technique to the $\{0,1\}$-VGAP in which every packing option of an item in every dimension must consume either the whole capacity of the bin or non of it. In one-dimension this problem is identical to weighted bipartite matching. For $\{0,1\}$-VGAP we partition the instance by a different criterion: the number of non-zero entries in the consumption vector of a packing option.



Another interesting special case of VGAP is the Vector Multiple Knapsack Problem (VMKP), in which all bins are identical, and the packing options of each item are identical for all bins. That is, regardless of the bin's identity, the item consumes the same amount of capacity and raises the same profit. For instances of VMKP with at least two bins, we describe a simpler algorithm that avoids partitioning the instance. Here, our algorithm uses a fractional solution for the LP-relaxation of the local problem only to make a binary decision whether to pack the current item or not. For the actual packing, it exploits the fact that all packing options are identical and uses greedy First Fit approach, typically used for the Bin Packing problem.

Finally, we prove a lower bound for the online vector knapsack problem in the random-order model, which also applies to VMKP and VGAP, and indicates that our algorithms are asymptotically optimal. This lower bound is inspired by the work of Babaioff et al.~\cite{DBLP:conf/soda/BabaioffIK07} on the matroid secretary problem, 
which is based solely on the inherent uncertainty due to the online nature of the problem without any complexity assumptions.

\paragraph{Our main contributions are:}
\begin{enumerate}
    \item We describe an algorithm for online VGAP with a competitive-ratio of $\sqrt[4]{e}\left(4d + 2\right) \approx 5.14d+2.57$, where $d$ is the dimension. For the VMKP with at least two bins we describe a $\left(4d + 2 \right)$-competitive algorithm. To the best of our knowledge, these problems are studied for the first time.
    
    \item We prove a matching lower bound of $\Omega(d)$ which is valid both  for VGAP and VMKP.  
    
    \item Our method improves upon the best-known competitive-ratio for (one-dimensional) GAP from $8.1$ to $6.99$ (which is also the best-known competitive-ratio for online knapsack).
\end{enumerate}

\section{Related Work}
Online packing problems in the random-order model have been studied extensively in recent years,
most of them are generalizations of the secretary problem which has an optimal $e$-competitive algorithm~\cite{dynkin1963optimum, lindley1961dynamic}. An immediate generalization is the multiple-choice secretary problem, in which one is allowed to pick up to $k$ secretaries. It was studied by Kleinberg~\cite{DBLP:conf/soda/Kleinberg05}, where he presented an asymptotically optimal $\frac{\sqrt{k}}{\sqrt{k} - 5}$-competitive algorithm.
Another related problem is the weighted-matching problem which has an optimal $e$-competitive algorithm by Kesselheim et al.~\cite{DBLP:conf/esa/KesselheimRTV13}.

The online knapsack problem, which generalizes the multiple-secretary problem, was studied by Babaioff et al.~\cite{DBLP:conf/approx/BabaioffIKK07} who presented an $10 e$-competitive algorithm. It was later improved by the work of Kesselheim et al.~\cite{DBLP:journals/siamcomp/KesselheimRTV18} on online GAP, which generalizes all of the above problems. They presented an $8.1$-competitive algorithm which is the best-known competitive-ratio for online GAP and the online knapsack problem. Our result for VGAP improves on that.

In their work, Kesselheim et al. also studied the online packing LPs problem with column sparsity $d$. The general online packing LPs problem was studied before by~\cite{DBLP:journals/corr/Agrawal,DBLP:conf/esa/FeldmanHKMS10,DBLP:conf/icalp/MolinaroR12}.
In this problem, there is a set of resources and a set of requests. Each request has several options to be served and each option is associated with a profit and a certain demand from each resource. For column sparsity $d$, each request may have a demand from at most $d$ of the resources. This problem generalizes VGAP studied in this paper, however, to the best of our knowledge, the only known competitive online algorithms for this problem are for the special case of $B \geq 2$, where $B$ is the capacity ratio, i.e., the minimal ratio between the capacity of a resource and the maximum demand for this resource. For this case they presented an $O\left(d^{1/(B-1)}\right)$-competitive algorithm which in case $B=\Omega\left(\log{d}/\epsilon^{2}\right)$ is $\left(1+\epsilon\right)$-competitive.

Dean et al.~\cite{DBLP:conf/soda/DeanGV05} showed that under the assumption of $\text{NP} \neq \text{ZPP}$, the packing integer programs problem (PIP, also known as vector knapsack) which is a special case of VMKP, cannot be approximated in polynomial time to within $d^{1-\epsilon}$ for any $\epsilon > 0$ even in the offline settings. Under the same assumptions, Chekuri et al.~\cite{DBLP:conf/soda/ChekuriK99} showed that the  $\{0,1\}$-case cannot be approximated to within $d^{1/2-\epsilon}$ for any $\epsilon > 0$. Their results are also applicable for the offline VMKP, VGAP and the $\{0,1\}$-VGAP. By using the results of Zuckerman~\cite{DBLP:journals/toc/Zuckerman07}, the same hardness result can be proved under the weaker assumption of $\text{P} \neq \text{NP} $ instead. As opposed to these results, our lower bound holds with no complexity assumptions, and applies even for algorithms with unbounded computational power.

Some related problems have competitive algorithms in the worst-case model too. One example is the AdWords problem which is a special case of GAP in which the profit of each item is equal to its size. Under the assumption that items are small compared to the capacity of the bins, Metha et al.~\cite{DBLP:journals/jacm/MehtaSVV07} presented an optimal $\frac{e}{e-1}$-competitive algorithm. Without this assumption, the best known competitive-ratio is $2$~\cite{DBLP:journals/geb/LehmannLN06}.
Another example is the online vector bin packing problem, in which items arrive one-by-one, and the goal is to pack them all in the minimum number of unit sized $d$-dimensional bins. This problem was studied by Garey et al.~\cite{garey1976resource} who showed that the First Fit algorithm has a worst-case competitive-ratio of $(d+0.7)$. More recently, Azar et al.~\cite{DBLP:conf/stoc/AzarCKS13} showed that this algorithm is asymptotically optimal by proving a lower bound of $\Omega\left(d^{1-\epsilon}\right)$. 


\section{Vector Generalized Assignment Problem} \label{VGAP section}

In the $d$-dimensional \emph{Generalized Assignment Problem} (VGAP),
we have a set of $m$ $d$-dimensional bins and a set of $n$
$d$-dimensional items that may be packed in the bins. Each bin $j$ has a capacity
$\mathbf{b}_{j}=\left(b_{j}^{1},\dots,b_{j}^{d}\right) \in \mathbb{R}_{\geq 0}^{d}$.
Packing item $i$ in bin $j$ consumes an amount of 
$\mathbf{w}_{i,j}=\left(w_{i,j}^{1},\dots,w_{i,j}^{d}\right)\in\mathbb{R}_{\geq0}^{d}$
from bin's $j$ capacity and provides a profit of $p_{i,j}\geq0$.
Each item may be packed in at most one of the bins and the capacity of each bin must not be exceeded in any of its $d$ dimensions.
The goal is to find a feasible packing that maximizes the total profit. We use the following LP-formulation:

\begin{center}
\begin{tabular}{ccc}
max & ${\displaystyle \sum_{i\in\left[n\right],\  j\in\left[m\right]}}p_{i,j}x_{i,j}$ & \\
 &  & \\
s.t. & ${\displaystyle \sum_{i\in\left[n\right]}}w_{i,j}^{t}x_{i,j}\leq b_{j}^{t}$ & \ \  \ \ \ \ \ \ \ \ \ \  $j\in\left[m\right],t\in\left[d\right]$\\
 &  & \\
 & ${\displaystyle \sum_{j\in\left[m\right]}}x_{i,j}\leq1$ &  \ \ \ \ \ \ \ \ \ \ \ \ \ \ \ \ \ \ \ \ \ \ \ $i\in\left[n\right]$\\
 &  & \\
 & $x_{i,j}\in\left\{ 0,1\right\} $ & \ \  \ \ \ \ \ \ \ \ \  \ $i\in\left[n\right],j\in\left[m\right]$.
\end{tabular}
\end{center}

We consider the online version of the problem in which the set of
bins and their capacities are initially known, as well as the total
number of items $n$. The items, however, arrive one by one in a random
order. When item $i$ arrives, we learn its \emph{packing options}, i.e.,
its consumption on every bin $\mathbf{w}_{i,1},\dots,\mathbf{w}_{i,m}$ (which we also call the \emph{weight vectors} of $i$)
along with the corresponding profits $p_{i,1},\dots,p_{i,m}$. After
every arrival, an immediate and irrevocable decision must be made:
Assign the item to one of the available bins or leave the item unpacked.

Our algorithm is based on the technique presented
by the authors of~\cite{DBLP:journals/siamcomp/KesselheimRTV18} with several critical improvements (see Algorithm~\ref{VGap Algorithm}). We call the packing
option of item $i$ in bin $j$ \emph{light }if $w_{i,j}^{t} \leq b_{j}^{t}/2$, 
$\forall t\in\left[d\right]$, otherwise, we call it \emph{heavy}.
Given a GAP instance $\mathcal{I}$ we partition it into two sub-instances
$\mathcal{I}_{heavy}$ and $\mathcal{I}_{light}$, both consist of
the original items and bins, however, $\mathcal{I}_{heavy}$ consists
only of the heavy packing options of every item, while $\mathcal{I}_{light}$
consists only of the light ones. In contrast to the algorithm presented
in~\cite{DBLP:journals/siamcomp/KesselheimRTV18} that makes a random choice whether
to operate on $\mathcal{I}_{heavy}$ or $\mathcal{I}_{light}$ exclusively,
our algorithm considers them both. It is based on the intuition that
heavy options may need a chance to be packed first, since any other
packing decision might prevent them from being packed, while light
options are more likely to fit in. Our algorithm operates in three
phases: the \emph{sampling phase} in which it only observes the arriving
items, the \emph{heavy phase} in which it considers only heavy options,
and the \emph{light phase} in which it considers only light options.
In the heavy phase, our algorithm uses a matching in a weighted bipartite
graph to make packing decisions, to this end, given an instance $\mathcal{I}$
we define a weighted bipartite graph $G\left(\mathcal{I}\right)=\left(L,R,E\right)$, where $L$ is the set of items of $\mathcal{I}$, $R$ is the
set of bins of $\mathcal{I}$, and there exists an edge $\left(i,j\right)\in E$ 
of weight $p_{i,j}$ if item $i$ can be packed in bin $j$ (i.e., $w_{i,j}^{t}\leq b_{j}^{t}$, $\forall t\in\left[d\right]$).
Each phase takes place in a 
continuous fraction of the online rounds.
To partition the rounds into phases, we use two parameters $q_{1}$
and $q_{2}$ that will be defined thereafter. For convenience of presentation
and analysis, we represent a packing by a set $\PACK \subseteq \left[n\right] \times \left[m\right]$
such that $\PACK=\left\{ \left(i,j\right) : i\text{ is packed in bin \ensuremath{j}}\right\} $.
We also define $p_{i,0}=0$, $\forall i\in\left[n\right]$. For an
instance $\mathcal{I}$ and a subset $S$ of its items, we denote
by $\mathcal{I}|_{S}$ the sub-instance that consists only of the
items in $S$. 


\IncMargin{1em}
\begin{algorithm}
\label{VGap Algorithm}
\caption{Online VGAP}

$S{}_{0}\leftarrow\emptyset$, $\PACK_{0}\leftarrow\emptyset$\;

\For { each item $i_{\ell}$ that arrives at round $\ell$ } {
    $ S_{\ell} \leftarrow S_{\ell-1} \cup \left\{ i_{\ell} \right\} $\;
    \uIf (\tcc*[f]{sampling phase}) {$\ell\leq q_{1}n$} {
		continue to the next round\;
    }
    \uElseIf( \tcc*[f]{heavy phase}){ $q_{1}n+1\le\ell\leq q_{2}n$ } {
        Let $x^{\left(\ell\right)}$ be a maximum-weight matching in
        $G\left(\mathcal{I}_{heavy}|_{S_{\ell}}\right)$;

		\tcp*[h]{compute a tentative assignment $\left(i_{\ell},j_{\ell}\right)$}

		\uIf {$i_\ell$ is matched in $x^{\left(\ell\right)}$} {
			Let $j_\ell$ be the bin to which $i_\ell$ is matched\; 
		} 
		\Else {
			$j_\ell \leftarrow 0 $
		}
		\If {$j_\ell\neq0$ {\bf and} $j_\ell$ is empty in $\PACK_{\ell - 1}$} {
			$\PACK_{\ell}\leftarrow \PACK_{\ell-1}\cup\left\{ \left(i_{\ell},j_{\ell}\right)\right\}$\;
    	}
    }
    \Else( \tcp*[h] {({$\ell \geq q_{2}n + 1$})} \tcc*[f]{ light phase  }) { 
        Let $x^{\left(\ell\right)}$ be an optimal fractional solution for
        the LP-relaxation of $\mathcal{I}_{light}|_{S_{\ell}}$\;

		\tcp*[h]{compute a tentative assignment $\left(i_{\ell},j_{\ell}\right)$ by randomized rounding}

		Choose bin $j_{\ell}$ randomly where $\Pr\left[j_{\ell}=j\right]=x_{i_{\ell},j}^{\left(\ell\right)}$
    	and $\Pr\left[j_{\ell}=0\right]=1-{\displaystyle \sum_{j\in\left[m\right]}}x_{i_{\ell},j}^{\left(\ell\right)}$\;
   	 \If{$j_\ell\neq0$ {\bf and} $ \PACK_{\ell - 1} \cup \left\{\left(i_{\ell},j_{\ell}\right)\right\}$ is feasible} {
        	$\PACK_{\ell}\leftarrow \PACK_{\ell-1}\cup\left\{ \left(i_{\ell},j_{\ell}\right)\right\}$\;
    	}
    }
}
\Return{ $\PACK_{n}$ }
\end{algorithm}
\DecMargin{1em}

We now analyze the performance of Algorithm~\ref{VGap Algorithm}.
Let $\OPT\left(\mathcal{I}\right)$ and $\ALG\left(\mathcal{I}\right)$
denote the overall profit of the optimal packing and the overall profit
of the packing produced by Algorithm~\ref{VGap Algorithm} on instance
$\mathcal{I}$ respectively.
Let $R_{\ell}$ denote the profit raised by the algorithm at round
$\ell$. In Lemma~\ref{Heavy phase lemma} and Lemma~\ref{Light phase lemma}
below, we bound the expected profit raised at each round of the heavy
phase and the light phase respectively. Similar claims are presented
in~\cite{DBLP:conf/esa/KesselheimRTV13} and~\cite{DBLP:journals/siamcomp/KesselheimRTV18}.

\begin{lem}
\label{Heavy phase lemma}For $q_{1}n+1\le\ell\leq q_{2}n$, we have
\[
\E\left[R_{\ell}\right]\geq\frac{q_{1}}{\ell-1}\cdot\frac{1}{d}\OPT\left(\mathcal{I}_{heavy}\right).
\]
\end{lem}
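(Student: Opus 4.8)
The plan is to follow the three-part template used by Kesselheim et al.\ for online matching, carrying along the extra factor of $d$ that the vector setting forces onto the matching relaxation. Write $\mathrm{w}(x)$ for the total weight of a matching $x$, and condition throughout on the \emph{unordered} set $S_\ell$ of the first $\ell$ arrivals; then $i_\ell$ is uniform in $S_\ell$, and the tentative pair $\left(i_\ell,j_\ell\right)$ is a deterministic function of $S_\ell$ and $i_\ell$ (fix a tie-breaking rule for $x^{(\ell)}$). Since $R_\ell=p_{i_\ell,j_\ell}\cdot\mathbf{1}\left[j_\ell\neq 0,\ j_\ell\text{ empty in }\PACK_{\ell-1}\right]$ and $p_{i,0}=0$, the indicator ``$j_\ell\neq 0$'' is harmless, so it suffices to combine (i) a lower bound on $\E\left[\mathrm{w}\left(x^{(\ell)}\right)\right]$ in terms of $\OPT\left(\mathcal{I}_{heavy}\right)$, (ii) the identity $\E\left[p_{i_\ell,j_\ell}\mid S_\ell\right]=\mathrm{w}\left(x^{(\ell)}\right)/\ell$, and (iii) a lower bound of $q_1 n/(\ell-1)$ on the probability that bin $j_\ell$ is still empty when round $\ell$ begins.

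For (i) I would first show that the maximum weight of a matching in $G\left(\mathcal{I}_{heavy}\right)$ is at least $\tfrac1d\,\OPT\left(\mathcal{I}_{heavy}\right)$: take an optimal packing of $\mathcal{I}_{heavy}$, and for each packed option $\left(i,j\right)$ pick a coordinate $t$ with $w_{i,j}^{t}>b_j^{t}/2$; since consumptions in a fixed bin and coordinate sum to at most the capacity, each bin receives at most one item whose chosen coordinate is $t$, so the packed options split into at most $d$ classes, each a matching of $G\left(\mathcal{I}_{heavy}\right)$, and the heaviest class has weight $\geq\OPT\left(\mathcal{I}_{heavy}\right)/d$. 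Then, because $S_\ell$ is a uniformly random $\ell$-subset of the items, restricting a global maximum-weight matching to the edges incident to items of $S_\ell$ gives a (sub-)matching of $G\left(\mathcal{I}_{heavy}|_{S_\ell}\right)$ of expected weight an $\ell/n$ fraction of the global one, so $\E\left[\mathrm{w}\left(x^{(\ell)}\right)\right]\geq\frac{\ell}{nd}\,\OPT\left(\mathcal{I}_{heavy}\right)$. Part (ii) is immediate: conditioned on $S_\ell$, $i_\ell$ is uniform in $S_\ell$, so the expected weight of the $x^{(\ell)}$-edge at $i_\ell$ is $\mathrm{w}\left(x^{(\ell)}\right)/\ell$.

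Part (iii) is the main obstacle. The key structural observation is that, within the heavy phase, the \emph{first} round $k$ at which the tentative bin equals $j_\ell$ necessarily has $j_\ell$ empty in $\PACK_{k-1}$ (no earlier heavy round chose it, and sampling rounds pack nothing), so $j_\ell$ does get filled there; hence $\left\{j_\ell\text{ empty in }\PACK_{\ell-1}\right\}=\left\{j_k\neq j_\ell\text{ for every heavy round }k\le\ell-1\right\}$, an event depending only on the arrival order of $S_{\ell-1}$. I would then expose that order by peeling items off $S_{\ell-1}$ one at a time, revealing $i_{\ell-1}$, then $i_{\ell-2}$, down to $i_{q_1 n+1}$: after the removals $i_{\ell-1},\dots,i_{k+1}$ the set $S_k$ is determined, so the unique item $a_k$ matched to $j_\ell$ in $x^{(k)}$ (if any) is determined, while $i_k$ is still uniform in $S_k$, whence $\Pr\left[j_k=j_\ell\mid\text{history}\right]=\Pr\left[i_k=a_k\mid\text{history}\right]\le 1/k$. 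The delicate point — the one I expect to need the most care — is that the already-settled events ``$j_r\neq j_\ell$'' for $r>k$ are functions of $i_{k+1},\dots,i_{\ell-1}$ only, so conditioning on them does not bias $i_k$ inside $S_k$; multiplying the per-step survival bounds then telescopes:
\[
\Pr\left[\,j_\ell\text{ empty in }\PACK_{\ell-1}\ \middle|\ S_\ell,i_\ell\,\right]\ \geq\ \prod_{k=q_1 n+1}^{\ell-1}\left(1-\tfrac1k\right)\ =\ \frac{q_1 n}{\ell-1}.
\]

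Finally I would assemble the pieces. Conditioning on $S_\ell$ and $i_\ell$, the emptiness bound is at least $q_1 n/(\ell-1)$ whenever $j_\ell\neq 0$, and $p_{i_\ell,0}=0$ otherwise, so
\[
\E\left[R_\ell\right]\ \geq\ \frac{q_1 n}{\ell-1}\,\E\left[p_{i_\ell,j_\ell}\right]\ =\ \frac{q_1 n}{\ell-1}\cdot\frac{\E\left[\mathrm{w}\left(x^{(\ell)}\right)\right]}{\ell}\ \geq\ \frac{q_1 n}{\ell(\ell-1)}\cdot\frac{\ell}{nd}\,\OPT\left(\mathcal{I}_{heavy}\right)\ =\ \frac{q_1}{\ell-1}\cdot\frac1d\,\OPT\left(\mathcal{I}_{heavy}\right),
\]
which is the claim. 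Steps (i) and (ii) should be routine once the $d$-matching decomposition is set up; the care-demanding part is the probabilistic bookkeeping in (iii), specifically justifying that conditioning on ``$j_\ell$ has survived so far'' keeps $i_k$ uniform in $S_k$.
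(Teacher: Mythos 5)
Your proof is correct and follows essentially the same route as the paper's: the factor $1/d$ comes from the same pigeonhole observation that at most $d$ heavy options fit in a bin (you decompose the global optimum into $d$ matchings and restrict to $S_\ell$, whereas the paper restricts first and then keeps the most profitable item per bin --- the same bound either way), and your backward-peeling argument for the survival probability is just a more carefully justified version of the paper's inductive $\prod_{k}\left(1-\frac{1}{k}\right)$ step.
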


\begin{proof}
Let $x^{*}$ be an optimal solution for $\mathcal{I}_{heavy}$, hence,
$p^{T}x^{*}=\OPT\left(\mathcal{I}_{heavy}\right)$, and let $x^{*}|_{S_{\ell}}$
denote the projection of $x^{*}$ onto the set of items $S_{\ell}$,
i.e., $\left(x^{*}|_{S_{\ell}}\right)_{i,j}=x_{i,j}^{*}$ if $i\in S_{\ell}$
and $\left(x^{*}|_{S_{\ell}}\right)_{i,j}=0$ otherwise. Observe (by the definition of heavy) that
in $x^{*}|_{S_{\ell}}$ every bin holds at most $d$ items. Let $x_{\ell}^{*}$
be the solution obtained from $x^{*}|_{S_{\ell}}$ by leaving only
the most profitable item in each bin. We get $p^{T}x_{\ell}^{*}\geq\frac{1}{d}\cdot p^{T}\left(x^{*}|_{S_{\ell}}\right)$.
Also, since $x_{\ell}^{*}$ is a feasible matching in $G\left(\mathcal{I}_{heavy}|_{S_{\ell}}\right)$,
we have $p^{T}x^{\left(\ell\right)}\geq p^{T}x_{\ell}^{*}\geq\frac{1}{d}\cdot p^{T}\left(x^{*}|_{S_{\ell}}\right)$.
Now since $S_{\ell}\subseteq\left[n\right]$ is a uniformly random
subset of size $\ell$, we have $\E\left[p^{T}\left(x^{*}|_{S_{\ell}}\right)\right]=\frac{\ell}{n}\cdot\OPT\left(\mathcal{I}_{heavy}\right)$.
Also, $i_{\ell}$ can be viewed as a uniformly random item of $S_{\ell}$,
and since $x^{\left(\ell\right)}$ is a matching we have $\E\left[p_{i_{\ell},j_{\ell}}\right]=\E\left[\sum_{j\in\left[m\right]}x_{i_{\ell},j}^{\left(\ell\right)}p_{i_{\ell},j}\right]=\frac{1}{\ell}\E\left[p^{T}x^{\left(\ell\right)}\right]$.
Combining the results together, we get
\[
\E\left[p_{i_{\ell},j_{\ell}}\right]=\frac{1}{\ell}\E\left[p^{T}x^{\left(\ell\right)}\right]\geq\frac{1}{\ell}\E\left[\frac{1}{d}\cdot p^{T}\left(x^{*}|_{S_{\ell}}\right)\right]=\frac{1}{n\cdot d}\OPT\left(\mathcal{I}_{heavy}\right).
\]
The above expectation is taken only over the random choice of the
subset $S_{\ell}\subseteq\left[n\right]$ and the random choice of
$i_{\ell}\in S_{\ell}$, while the arrival order of items in previous
rounds is irrelevant. We now bound the probability of successful assignment
over the random arrival order of previous items. The assignment is
successful if no item is packed in $j_{\ell}$ in rounds $q_{1}n,\dots,\ell-1$.
At round $\ell-1$ the algorithm uses a maximum-weight matching in
$G\left(\mathcal{I}_{heavy}|_{S_{\ell-1}}\right)$ to compute a tentative
assignment $\left(i_{\ell-1},j_{\ell-1}\right)$. In that matching
at most one item is matched to $j_{\ell}$. Since $i_{\ell-1}$ is
a uniformly random item of $S_{\ell-1}$, the probability that $i_{\ell-1}$
is matched to $j_{\ell}$ is at most $1/\left(\ell-1\right)$ regardless
of the arrival order of the items in rounds $1,\dots,\ell-2$, hence,
we can treat subsequent events as independent and repeat the argument
inductively from $\ell-1$ to $q_{1}n+1$ to get, 
\[
\Pr\left[\text{successful assignment}\right]\geq\prod_{k=q_{1}n+1}^{\ell-1}\left(1-\frac{1}{k}\right)=\frac{q_{1}n}{\ell-1}.
\]
Combining the expected profit with the probability of successful assignment, we get
\[
\E\left[R_{\ell}\right]\geq\frac{q_{1}}{\ell-1}\cdot\frac{1}{d}\OPT\left(\mathcal{I}_{heavy}\right). \qedhere
\] 
\end{proof}

\begin{lem}
\label{Light phase lemma}
For $\ell\geq q_{2}n+1$, we have
\[
\E\left[R_{\ell}\right]\geq\frac{q_{1}}{q_{2}}\left(1-2d\sum_{k=q_{2}n+1}^{\ell-1}\frac{1}{k}\right)\frac{1}{n}\OPT\left(\mathcal{I}_{light}\right).
\]
\end{lem}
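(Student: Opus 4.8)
The plan is to follow the template of the proof of Lemma~\ref{Heavy phase lemma}: first bound the expected profit of the tentative assignment $(i_\ell,j_\ell)$ produced at round $\ell$ while \emph{ignoring} feasibility, then bound the probability that this assignment is actually carried out, and finally combine the two (recall $R_\ell=p_{i_\ell,j_\ell}$ if $(i_\ell,j_\ell)$ is added to $\PACK_\ell$ and $R_\ell=0$ otherwise). Throughout I condition on the unordered set $S_\ell$ of the first $\ell$ arrivals together with the identity of $i_\ell$; this fixes $S_{\ell-1}=S_\ell\setminus\{i_\ell\}$ and makes the first $\ell-1$ arrivals a uniformly random permutation of it. Since $j_\ell$ is obtained by applying the round-$\ell$ randomized rounding to $x^{(\ell)}$, which is a deterministic function of $S_\ell$, conditioning in addition on $j_\ell=j$ changes nothing about rounds $1,\dots,\ell-1$, and moreover $\Pr[j_\ell=j\mid S_\ell,i_\ell]=x^{(\ell)}_{i_\ell,j}$.

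\emph{Profit.} Since $x^{(\ell)}$ is an optimal fractional solution to the LP-relaxation of $\mathcal{I}_{light}|_{S_\ell}$ and the restriction $x^{*}|_{S_\ell}$ of an optimal integral solution $x^{*}$ for $\mathcal{I}_{light}$ is feasible for that LP, $\E[p^{T}x^{(\ell)}]\ge\E[p^{T}(x^{*}|_{S_\ell})]=\frac{\ell}{n}\OPT(\mathcal{I}_{light})$. As $i_\ell$ is a uniformly random element of $S_\ell$, this gives $\E\big[\sum_{j}x^{(\ell)}_{i_\ell,j}p_{i_\ell,j}\big]=\frac{1}{\ell}\E[p^{T}x^{(\ell)}]\ge\frac{1}{n}\OPT(\mathcal{I}_{light})$.

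\emph{Feasibility.} Fix a bin $j=j_\ell\in[m]$. The assignment $(i_\ell,j)$ is carried out whenever both of the following hold: (a) no item was assigned to bin $j$ during the heavy phase; and (b) for every coordinate $t\in[d]$ the total weight that the light-phase rounds $q_2n+1,\dots,\ell-1$ \emph{attempt} to place in bin $j$ is at most $b_j^{t}/2$ --- indeed, under (a) the actual load of bin $j$ before round $\ell$ is at most that attempted load, and since $(i_\ell,j)$ is a light option its $t$-th weight is at most $b_j^{t}/2$, so bin $j$ stays within capacity. For (b): at a light-phase round $k$, $i_k$ is uniform in $S_k$ and the rounding picks bin $j$ with probability $x^{(k)}_{i_k,j}$, so the expected $t$-th weight attempted in bin $j$ at round $k$, given $S_k$, equals $\frac{1}{k}\sum_{i\in S_k}w^{t}_{i,j}x^{(k)}_{i,j}\le\frac{1}{k}b_j^{t}$ by LP-feasibility of $x^{(k)}$; Markov's inequality together with a union bound over the $d$ coordinates then shows that (b) fails with probability at most $2d\sum_{k=q_2n+1}^{\ell-1}\frac{1}{k}$. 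For (a): since bin $j$ receives a (necessarily unique) heavy item exactly when some arriving item is matched to $j$ in the corresponding maximum-weight matching, the inductive telescoping argument of Lemma~\ref{Heavy phase lemma}, run over the heavy-phase rounds, shows that \emph{conditioned on the set $S_{q_2n}$}, bin $j$ receives no item during the heavy phase with probability at least $\prod_{k=q_1n+1}^{q_2n}\big(1-\frac{1}{k}\big)=\frac{q_1}{q_2}$, for every realization of $S_{q_2n}$.

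The delicate point --- and the main obstacle --- is combining (a) and (b): a union bound only gives a success probability of $\frac{q_1}{q_2}-2d\sum_{k=q_2n+1}^{\ell-1}\frac1k$, which is far too weak for large $d$, while a direct factorization $\Pr[(a)]\cdot\Pr[(b)\mid(a)]$ fails because conditioning on the heavy-phase event (a) distorts the random-order distribution underlying the estimate in (b). I would resolve this by conditioning on the set $S_{q_2n}$: given $S_{q_2n}$, event (a) is a function only of the uniformly random order of the first $q_2n$ arrivals, whereas event (b) is a function only of the order of arrivals $q_2n+1,\dots,\ell-1$ and of the independent rounding coins of those rounds, so (a) and (b) are conditionally independent. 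Hence $\Pr[(a)\wedge(b)\mid S_{q_2n}]=\Pr[(a)\mid S_{q_2n}]\cdot\Pr[(b)\mid S_{q_2n}]\ge\frac{q_1}{q_2}\Pr[(b)\mid S_{q_2n}]$, and averaging over $S_{q_2n}$ yields $\Pr[(a)\wedge(b)]\ge\frac{q_1}{q_2}\Pr[(b)]\ge\frac{q_1}{q_2}\big(1-2d\sum_{k=q_2n+1}^{\ell-1}\frac1k\big)$, where the last step uses the bound on (b) above, which was derived \emph{without} conditioning on $S_{q_2n}$. Since this lower bound on the success probability is uniform over $j\in[m]$, multiplying it by the per-item profit, using $\sum_{j}\Pr[j_\ell=j\mid S_\ell,i_\ell]p_{i_\ell,j}=\sum_j x^{(\ell)}_{i_\ell,j}p_{i_\ell,j}$ (and $p_{i_\ell,0}=0$), and taking expectation over $S_\ell,i_\ell$ gives $\E[R_\ell]\ge\frac{q_1}{q_2}\big(1-2d\sum_{k=q_2n+1}^{\ell-1}\frac1k\big)\frac1n\OPT(\mathcal{I}_{light})$, as claimed.
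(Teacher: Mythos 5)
Your proof is correct and follows essentially the same route as the paper's: bound the expected tentative profit via the local LP optimum, then bound the success probability by combining the heavy-phase matching argument with a Markov-plus-union bound on the attempted light-phase load of bin $j_\ell$. Your explicit conditioning on $S_{q_2 n}$ to factor the heavy-phase and light-phase events is just a careful spelling-out of the step the paper dispatches with ``since this event is independent of the arrival order of items in the heavy phase.''
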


\begin{proof}
Let $x^{*}$ be an optimal solution for $\mathcal{I}_{light}$. At
round $\ell\geq q_{2}n+1$ the algorithm uses randomized rounding
to determine the tentative assignment of $i_{\ell}$ from the fractional
LP-solution $x^{\left(\ell\right)}$, therefore, $\E\left[p_{i_{\ell},j_{\ell}}\right]=\E\left[\sum_{j\in\left[m\right]}x_{i_{\ell},j}^{\left(\ell\right)}p_{i_{\ell},j}\right]$.
Using this observation, we can now follow a similar argument to that in the proof of Lemma~\ref{Heavy phase lemma}
and get that for $\ell\geq q_{2}n+1$, we have
\[
\E\left[p_{i_{\ell},j_{\ell}}\right]=
\frac{1}{\ell}\E{\left[p^{T}x^{\left(\ell\right)}\right]}\geq
\frac{1}{\ell}\E{\left[p^{T}\left(x^{*}|_{S_{\ell}}\right)\right]}=
\frac{1}{n}\OPT\left(\mathcal{I}_{light}\right),
\]
where the expectation is taken only over the random choice of the
subset $S_{\ell}\subseteq\left[n\right]$, the random choice of $i_{\ell}\in S_{\ell}$
and the internal randomness of the algorithm at round $\ell$. Here
too, we bound the probability of successful assignment over the random
arrival order of previous items and the internal randomness of the
algorithm in previous rounds. Let us denote by $c\left(j,t,\ell\right)$
the total consumption of tentative assignments to bin $j$ in dimension
$t$ during the light phase and before round $\ell$. At round $\ell$,
the algorithm considers only light options, therefore, the assignment
of $i_{\ell}$ to $j_{\ell}$ must be successful if the following
conditions hold: (1) no item was packed in $j_{\ell}$ during the
heavy phase, and (2) for every dimension $t\in\left[d\right]$, $c\left(j_{\ell},t,\ell\right)\le b_{j_{\ell}}^{t}/2$.
Let us denote event (1) by $H_{\ell}$, and the events described in
(2) by $L_{\ell}^{t}$ for every dimension $t\in\left[d\right]$.
We now bound $\E\left[c\left(j_{\ell},t,\ell\right)\right]$ for every
$t\in\left[d\right]$. Fix $t\in\left[d\right]$, at round $k<\ell$
of the light phase, the algorithm computes a tentative assignment
based on a fractional optimal solution for the LP-relaxation of $\mathcal{I}_{light}|_{S_{k}}$.
In that solution, the total consumption of bin $j_{\ell}$ in dimension
$t$ is at most $b_{j_{\ell}}^{t}$. Since $i_{k}$ can be viewed
as a uniformly random item of $S_{k}$, the expected consumption of
$i_{k}$ from $j_{\ell}$ in dimension $t$ is at most $b_{j_{\ell}}^{t}/k$,
where the expectation is taken over the choice of $i_{k}\in S_{k}$
and the internal randomness of the algorithm at round $k$. Therefore, it is independent of the arrival order of items in rounds $1,\dots,k-1$, and the internal randomness used in those rounds.
Hence, $\E\left[c\left(j_{\ell},t,\ell\right)\right]\leq\sum_{k=q_{2}n+1}^{\ell-1}b_{j_{\ell}}^{t}/k$.
We have
\[
\Pr\left[\bigwedge_{t=1}^{d}L_{\ell}^{t}\right]=1-\Pr\left[\bigvee_{t=1}^{d}\neg L_{\ell}^{t}\right]\geq1-\sum_{t=1}^{d}\Pr\left[\neg L_{\ell}^{t}\right]\geq1-\sum_{t=1}^{d}\frac{\sum_{k=q_{2}n+1}^{\ell-1}b_{j_{\ell}}^{t}/k}{b_{j_{\ell}}^{t}/2}\geq1-2d\sum_{k=q_{2}n+1}^{\ell-1}\frac{1}{k}.
\]
 The first inequality is due to a union bound, and the second is 
due to Markov's inequality. Since this event is independent of the
arrival order of items in the heavy phase, we can follow the argument
from the proof of the previous lemma and get
\begin{align*}
\Pr\left[\text{successful assignment}\right] & \geq \Pr\left[H_{\ell}\wedge\bigwedge_{t=1}^{d}L_{\ell}^{t}\right] \\
& \geq \prod_{k=q_{1}n+1}^{q_{2}n}\left(1-\frac{1}{k}\right)\left(1-2d\sum_{k=q_{2}n+1}^{\ell-1}\frac{1}{k}\right) \\
& = \frac{q_{1}}{q_{2}}\left(1-2d\sum_{k=q_{2}n+1}^{\ell-1}\frac{1}{k}\right).
\end{align*}
We can now combine the results of the expected profit and the success
probability to get the lemma.
\end{proof}

\begin{thm}
\label{Therorem_gap_competitive} For $q_2 = 2d/\left(2d+1\right)$ and $q_1 = q_2/\sqrt[4]{e}$, Algorithm~\ref{VGap Algorithm} is $\sqrt[4]{e}(4d + 2)$-competitive.
\end{thm}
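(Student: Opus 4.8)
The plan is to sum the per-round lower bounds of Lemma~\ref{Heavy phase lemma} and Lemma~\ref{Light phase lemma} over the heavy phase and the light phase respectively, and to argue that each phase on its own already pays for its share of the optimum. The starting observation is that every packing option is either heavy or light, so an optimal packing of $\mathcal{I}$ decomposes into its heavy assignments and its light assignments, and these are feasible packings for $\mathcal{I}_{heavy}$ and $\mathcal{I}_{light}$ respectively (feasibility is inherited by taking subsets of a packing). Hence $\OPT(\mathcal{I}) \le \OPT(\mathcal{I}_{heavy}) + \OPT(\mathcal{I}_{light})$. Since $\E[\ALG(\mathcal{I})] = \sum_{\ell=1}^{n}\E[R_{\ell}]$ and the rounds split into the three phases, it suffices to show that the heavy-phase rounds contribute at least $\frac{1}{\sqrt[4]{e}(4d+2)}\OPT(\mathcal{I}_{heavy})$ and the light-phase rounds contribute at least $\frac{1}{\sqrt[4]{e}(4d+2)}\OPT(\mathcal{I}_{light})$; adding these two bounds then gives the theorem.

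For the heavy phase, Lemma~\ref{Heavy phase lemma} yields $\sum_{\ell=q_1 n+1}^{q_2 n}\E[R_{\ell}] \ge \frac{q_1}{d}\OPT(\mathcal{I}_{heavy})\sum_{\ell=q_1 n+1}^{q_2 n}\frac{1}{\ell-1}$. I would lower-bound the harmonic sum by $\int_{q_1 n}^{q_2 n}\frac{dx}{x} = \ln(q_2/q_1)$, which equals $\frac14$ precisely because $q_1 = q_2/\sqrt[4]{e}$; this gives a heavy-phase contribution of at least $\frac{q_1}{4d}\OPT(\mathcal{I}_{heavy})$. Substituting $q_2 = \frac{2d}{2d+1}$ and $q_1 = q_2/\sqrt[4]{e}$ into $\frac{q_1}{4d}$ gives exactly $\frac{1}{\sqrt[4]{e}(4d+2)}$, as needed.

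For the light phase, Lemma~\ref{Light phase lemma} yields $\sum_{\ell=q_2 n+1}^{n}\E[R_{\ell}] \ge \frac{q_1}{q_2}\cdot\frac{1}{n}\OPT(\mathcal{I}_{light})\cdot T$, where $T = \sum_{\ell=q_2 n+1}^{n}\bigl(1 - 2d\sum_{k=q_2 n+1}^{\ell-1}\frac{1}{k}\bigr)$. I would bound the inner harmonic sum above by $\ln\frac{\ell-1}{q_2 n}$, and then, since $x \mapsto 1 - 2d\ln\frac{x-1}{q_2 n}$ is decreasing, compare the resulting sum to an integral to get $T \ge \int_{q_2 n}^{n}\bigl(1 - 2d\ln\frac{x}{q_2 n}\bigr)dx$. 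Evaluating this integral via the substitution $u = x/(q_2 n)$ together with $\int \ln u\,du = u\ln u - u$ gives $n\bigl((2d+1)(1-q_2) - 2d\ln\frac{1}{q_2}\bigr)$, which for $q_2 = \frac{2d}{2d+1}$ simplifies to $n\bigl(1 - 2d\ln(1+\frac{1}{2d})\bigr)$. Thus the light phase contributes at least $\frac{q_1}{q_2}\bigl(1 - 2d\ln(1+\frac{1}{2d})\bigr)\OPT(\mathcal{I}_{light}) = \frac{1}{\sqrt[4]{e}}\bigl(1 - 2d\ln(1+\frac{1}{2d})\bigr)\OPT(\mathcal{I}_{light})$, and it remains to verify the scalar inequality $1 - 2d\ln(1+\frac{1}{2d}) \ge \frac{1}{4d+2}$; this is the one genuine computation in the proof, and I would obtain it from $\ln(1+x) \le x - \frac{x^2}{2} + \frac{x^3}{3}$ at $x = \frac{1}{2d}$, which after clearing denominators reduces to $d \ge 1$.

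Combining the two phase bounds then gives $\E[\ALG(\mathcal{I})] \ge \frac{1}{\sqrt[4]{e}(4d+2)}\bigl(\OPT(\mathcal{I}_{heavy}) + \OPT(\mathcal{I}_{light})\bigr) \ge \frac{1}{\sqrt[4]{e}(4d+2)}\OPT(\mathcal{I})$, i.e., $\sqrt[4]{e}(4d+2)\cdot\E[\ALG(\mathcal{I})] \ge \OPT(\mathcal{I})$, which is the claimed competitive ratio. Throughout I am glossing over the standing assumption that $q_1 n$ and $q_2 n$ are integers (otherwise one replaces them by floors and absorbs lower-order terms), a routine adjustment. The main obstacle is the light-phase estimate: controlling the double sum $T$ through the sum-to-integral comparisons in the right direction, and closing the argument with the scalar inequality $1 - 2d\ln(1+\frac{1}{2d}) \ge \frac{1}{4d+2}$ — everything else is straightforward bookkeeping once $\OPT \le \OPT(\mathcal{I}_{heavy}) + \OPT(\mathcal{I}_{light})$ is in hand.
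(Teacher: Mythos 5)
Your proposal is correct and follows essentially the same route as the paper's proof: the same decomposition $\OPT(\mathcal{I})\le\OPT(\mathcal{I}_{heavy})+\OPT(\mathcal{I}_{light})$, the same per-phase summation of Lemmas~\ref{Heavy phase lemma} and~\ref{Light phase lemma} with sum-to-integral comparisons yielding $\frac{q_1}{d}\ln(q_2/q_1)$ and $\frac{q_1}{q_2}\left(\left(2d+1\right)\left(1-q_2\right)-2d\ln\left(1/q_2\right)\right)$, and the same closing inequality via $\ln(1+x)\le x-\frac{x^2}{2}+\frac{x^3}{3}$. The only cosmetic difference is that the paper exchanges the order of summation in the light-phase double sum before integrating, whereas you bound the inner harmonic sum first; both land on the identical expression.
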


\begin{proof}
The overall profit of the algorithm can be written as $\E\left[\sum_{\ell=1}^{n}R_{\ell}\right]=\sum_{\ell=1}^{n}\E\left[R_{\ell}\right]$. We sum over the profit raised in each phase separately. For the heavy phase we have
\begin{align*}
\sum_{\ell=q_{1}n+1}^{q_{2}n}\E\left[R_{\ell}\right] & \geq \sum_{\ell=q_{1}n+1}^{q_{2}n}\frac{q_{1}}{\ell-1}\cdot\frac{1}{d}\OPT\left(\mathcal{I}_{heavy}\right)\\
 & = \OPT\left(\mathcal{I}_{heavy}\right)\frac{q_{1}}{d}\sum_{\ell=q_{1}n}^{q_{2}n-1}\frac{1}{\ell}\\
 & \geq \OPT\left(\mathcal{I}_{heavy}\right)\frac{q_{1}}{d}\ln\left(\frac{q_{2}}{q_{1}}\right).
\end{align*}
The first inequality follows from Lemma~\ref{Heavy phase lemma} and the second inequality is due to the fact that $\sum_{\ell=q_{1}n}^{q_{2}n-1}\frac{1}{\ell}\geq\intop_{q_{1}n}^{q_{2}n}\frac{1}{x}dx=\ln\left(\frac{q_{2}}{q_{1}}\right)$.
For the light phase we have
\begin{align*}
\sum_{\ell=q_{2}n+1}^{n}\E\left[R_{\ell}\right] & \geq \sum_{\ell=q_{2}n+1}^{n}\frac{q_{1}}{q_{2}}\left(1-2d\sum_{k=q_{2}n+1}^{\ell-1}\frac{1}{k}\right)\frac{1}{n}\OPT\left(\mathcal{I}_{light}\right)\\
 & = \frac{1}{n}\OPT\left(\mathcal{I}_{light}\right)\frac{q_{1}}{q_{2}}\left(\left(1-q_{2}\right)n-2d\sum_{k=q_{2}n+1}^{n}\left(\frac{n}{k}-1\right)\right)\\
 & \geq \OPT\left(\mathcal{I}_{light}\right)\frac{q_{1}}{q_{2}}\left(\left(2d+1\right)\left(1-q_{2}\right)-2d\ln\left(\frac{1}{q_{2}}\right)\right).
\end{align*}
The first inequality is due to Lemma~\ref{Light phase lemma} and the second inequality follows from the fact that $\sum_{k=q_{2}n+1}^{n} \frac{1}{k} \leq \intop_{q_{2}n}^{n}\frac{1}{x}dx = \ln\left(\frac{1}{q_{2}}\right)$. Overall we get
\begin{align}
\E\left[\ALG\right]\geq\OPT\left(\mathcal{I}_{heavy}\right)\frac{q_{1}}{d}\ln\left(\frac{q_{2}}{q_{1}}\right)+\OPT\left(\mathcal{I}_{light}\right)\frac{q_{1}}{q_{2}}\left(\left(2d+1\right)\left(1-q_{2}\right)-2d\ln\left(\frac{1}{q_{2}}\right)\right). \label{eq:VGAP_CR}
\end{align}
For the parameters $q_2 = 2d/\left(2d+1\right)$, $q_1 = q_2/\sqrt[4]{e}$, we have 
\[
\frac{q_1}{d} \ln \left( \frac{q_2}{q_1} \right) =  \frac{1}{4\sqrt[4]{e}} \frac{2}{2d+1} = \frac{1}{\sqrt[4]{e}(4d + 2)}.
\]
Using the fact that for $x \geq 0$, $\ln\left(1+x\right)\leq x-\frac{1}{2}x^{2}+\frac{1}{3}x^{3}$, we have
\[
\frac{q_{1}}{q_{2}}\left(\left(2d+1\right)\left(1-q_{2}\right)-2d\ln\left(\frac{1}{q_{2}}\right)\right) =\frac{1}{\sqrt[4]{e}} \left( 1 - 2d \ln\left({1+\frac{1}{2d}}\right) \right)
\geq \frac{1}{\sqrt[4]{e}} \left( \frac{1}{4d} - \frac{1}{12d^{2}} \right).
\]
It can be easily verified that $\left( 1/4d -1/12d^{2} \right) \geq 1/(4d+2)$ for $d \geq 1$. Now since $\OPT\left(\mathcal{I}_{heavy}\right)+\OPT\left(\mathcal{I}_{light}\right)\geq\OPT\left(\mathcal{I}\right)$, we get 
\begin{align*}
\E\left[\ALG\right] & \geq \frac{1}{\sqrt[4]{e}(4d + 2)} \OPT\left(\mathcal{I}_{heavy}\right) + \frac{1}{\sqrt[4]{e}(4d + 2)}\OPT\left(\mathcal{I}_{light}\right)\\
 & \geq \frac{1}{\sqrt[4]{e}(4d + 2)} \left( \OPT\left(\mathcal{I}_{heavy}\right) + \OPT\left(\mathcal{I}_{light}\right) \right) \\
 & \geq \frac{1}{\sqrt[4]{e}(4d + 2)} \OPT. \qedhere
\end{align*}

\end{proof}

It is important to note that for $d=1$, the competitive-ratio can be improved by choosing $q_1=0.5256$ and $q_2=0.69$. Setting these parameters in Inequality \eqref{eq:VGAP_CR} shows that Algorithm~\ref{VGap Algorithm} is $6.99$-competitive for the (one-dimensional) generalized assignment problem, which improves upon the best-known competitive-ratio of $8.1$ achieved by Kesselheim et al.~\cite{DBLP:journals/siamcomp/KesselheimRTV18}.
\begin{rem} \label{packing lps remark}
Algorithm~\ref{VGap Algorithm} can easily be extended to the case where each item has $K\geq1$ different packing options in each bin in a similar way to the algorithm of Kesselheim et al.~\cite{DBLP:journals/siamcomp/KesselheimRTV18}. Therefore, the general online packing LPs problem with $n$ requests and $m$ resources can be viewed as a special case of VGAP with one $m$-dimensional bin and $n$ items.
\end{rem}

\subsection{ The \{0,1\}-VGAP}
The \emph{$\{0,1\}$-VGAP} is a special case of VGAP in which the consumption of item $i$ from bin $j$ in dimension $t$ is either $0$ or the whole capacity of bin $j$ in dimension $t$. By scaling, we can assume without loss of generality that $\mathbf{b}_{j}=\mathbf{1}$, $\forall j \in [m]$ and $\mathbf{w}_{i,j}\in\left\{ 0,1\right\}^{d}$, $\forall i\in\left[n\right]$, $\forall j\in\left[m\right]$.\footnote{$\mathbf{1}$ denotes the all 1's vector.} Note that for $d=1$ the problem is identical to weighted bipartite matching.

As for the general VGAP, given an instance $\mathcal{I}$ we partition it into two sub-instances, however, we make the partition according to the density of the weight vectors. We call the packing option of item $i$ in bin $j$ \emph{dense} if $\left|\supp\left(\mathbf{w}_{i,j}\right)\right|\geq\sqrt{d}$,  otherwise, we call it \emph{sparse}.\footnote{$\supp(\cdot)$ denotes the set of indices of non-zero entries of a vector.} We denote by $\mathcal{I}_{dense}$ the sub-instance that consists only of the dense packing options of every item, and by 
$\mathcal{I}_{sparse}$ the complementary sub-instance that consists only of the sparse packing options.

Our algorithm for this case is similar to Algorithm~\ref{VGap Algorithm}. It is based on the simple observation that in $\mathcal{I}_{dense}$ at most $\sqrt{d}$ items can be packed in every bin, therefore, a maximum weight matching in $G(\mathcal{I}_{dense})$ has a weight of at least $\OPT({\mathcal{I}_{dense}})/\sqrt{d}$. As Algorithm~\ref{VGap Algorithm}, it operates in three phases: a sampling phase, the \emph{dense phase} in which it considers only dense options,
and the \emph{sparse phase} in which it considers only sparse options. The parameters $q_1$ and $q_2$ will be defined in the analysis.

\IncMargin{1em}
\begin{algorithm}
\label{0_1 VGAP Algorithm}
\caption{Online $\{0,1\}$-VGAP}

$S{}_{0}\leftarrow\emptyset$, $\PACK_{0}\leftarrow\emptyset$\;

\For { each item $i_{\ell}$ that arrives at round $\ell$ } {
    $ S_{\ell} \leftarrow S_{\ell-1} \cup \left\{ i_{\ell} \right\} $\;
    \uIf (\tcc*[f]{sampling phase}) {$\ell\leq q_{1}n$} {
		continue to the next round\;
    }
    \uElseIf( \tcc*[f]{dense phase}){ $q_{1}n+1\le\ell\leq q_{2}n$ } {
        Let $x^{\left(\ell\right)}$ be a maximum-weight matching in
        $G\left(\mathcal{I}_{dense}|_{S_{\ell}}\right)$;

		\tcp*[h]{compute a tentative assignment $\left(i_{\ell},j_{\ell}\right)$}

		\uIf {$i_\ell$ is matched in $x^{\left(\ell\right)}$} {
			Let $j_\ell$ be the bin to which $i_\ell$ is matched\; 
		} 
		\Else {
			$j_\ell \leftarrow 0 $
		}
		\If {$j_\ell\neq0$ {\bf and} $j_\ell$ is empty in $\PACK_{\ell - 1}$} {
			$\PACK_{\ell}\leftarrow \PACK_{\ell-1}\cup\left\{ \left(i_{\ell},j_{\ell}\right)\right\}$\;
    	}
    }
    \Else( \tcp*[h] {({$\ell \geq q_{2}n + 1$})} \tcc*[f]{ sparse phase  }) { 
        Let $x^{\left(\ell\right)}$ be an optimal fractional solution for
        the LP-relaxation of $\mathcal{I}_{sparse}|_{S_{\ell}}$\;

		\tcp*[h]{compute a tentative assignment $\left(i_{\ell},j_{\ell}\right)$ by randomized rounding}

		Choose bin $j_{\ell}$ randomly where $\Pr\left[j_{\ell}=j\right]=x_{i_{\ell},j}^{\left(\ell\right)}$
    	and $\Pr\left[j_{\ell}=0\right]=1-{\displaystyle \sum_{j\in\left[m\right]}}x_{i_{\ell},j}^{\left(\ell\right)}$\;
   	 \If{$j_\ell\neq0$ {\bf and} $ \PACK_{\ell - 1} \cup \left\{\left(i_{\ell},j_{\ell}\right)\right\}$ is feasible} {
        	$\PACK_{\ell}\leftarrow \PACK_{\ell-1}\cup\left\{ \left(i_{\ell},j_{\ell}\right)\right\}$\;
    	}
    }
\Return{ $\PACK_{n}$ }
}
\end{algorithm}
\DecMargin{1em}
\begin{lem}
\label{Dense phase lemma}
For $ q_{1}n + 1 \leq \ell \leq q_{2}n$, we have 
\[
\E[R_{\ell}] \geq \frac {q_{1}}{\ell - 1} \cdot \frac {1}{\sqrt{d}} \OPT(\mathcal{I}_{dense}).
\]
\end{lem}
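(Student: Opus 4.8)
The plan is to follow the proof of Lemma~\ref{Heavy phase lemma} almost verbatim, replacing the factor $d$ by $\sqrt{d}$ by means of the density-based analogue of the ``heavy'' observation. The crucial structural fact is the one flagged in the text: in any feasible packing of $\mathcal{I}_{dense}$, each bin receives at most $\sqrt{d}$ items. Indeed, after the normalization $\mathbf{b}_{j}=\mathbf{1}$ and $\mathbf{w}_{i,j}\in\{0,1\}^{d}$, if several items are packed in a common bin their support sets must be pairwise disjoint (a dimension with capacity $1$ cannot absorb two unit consumptions), and each \emph{dense} option has support of size at least $\sqrt{d}$, so at most $d/\sqrt{d}=\sqrt{d}$ of them fit.

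First I would fix an optimal solution $x^{*}$ for $\mathcal{I}_{dense}$, so $p^{T}x^{*}=\OPT(\mathcal{I}_{dense})$, and consider its restriction $x^{*}|_{S_{\ell}}$ to the items that have arrived. By the observation above, $x^{*}|_{S_{\ell}}$ places at most $\sqrt{d}$ items in each bin, so keeping only the most profitable item per bin yields a feasible matching $x_{\ell}^{*}$ in $G(\mathcal{I}_{dense}|_{S_{\ell}})$ with $p^{T}x_{\ell}^{*}\geq\frac{1}{\sqrt{d}}\,p^{T}(x^{*}|_{S_{\ell}})$. Since $x^{(\ell)}$ is a maximum-weight matching in the same graph, $p^{T}x^{(\ell)}\geq p^{T}x_{\ell}^{*}\geq\frac{1}{\sqrt{d}}\,p^{T}(x^{*}|_{S_{\ell}})$. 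Taking expectations over the uniformly random $\ell$-subset $S_{\ell}\subseteq[n]$ gives $\E[p^{T}(x^{*}|_{S_{\ell}})]=\frac{\ell}{n}\OPT(\mathcal{I}_{dense})$, and since $i_{\ell}$ is a uniformly random element of $S_{\ell}$ and $x^{(\ell)}$ is a matching, $\E[p_{i_{\ell},j_{\ell}}]=\frac{1}{\ell}\E[p^{T}x^{(\ell)}]\geq\frac{1}{n\sqrt{d}}\OPT(\mathcal{I}_{dense})$, with this expectation independent of the arrival order of the earlier rounds.

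It then remains to bound the probability that the tentative assignment $(i_{\ell},j_{\ell})$ is carried out, which happens whenever no item was packed in $j_{\ell}$ during the dense rounds $q_{1}n+1,\dots,\ell-1$. In each such round $k$ the maximum-weight matching matches at most one item to $j_{\ell}$, and $i_{k}$ is a uniformly random element of $S_{k}$, so the probability of a collision at round $k$ is at most $1/k$ independently of the arrival order in rounds $1,\dots,k-1$; treating these events as independent and iterating yields $\Pr[\text{success}]\geq\prod_{k=q_{1}n+1}^{\ell-1}\left(1-\frac{1}{k}\right)=\frac{q_{1}n}{\ell-1}$. Multiplying the expected-profit bound by the success probability gives $\E[R_{\ell}]\geq\frac{q_{1}}{\ell-1}\cdot\frac{1}{\sqrt{d}}\OPT(\mathcal{I}_{dense})$, as claimed.

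The only genuinely new ingredient is the ``$\sqrt{d}$ items per bin'' bound for $\mathcal{I}_{dense}$; everything else is a transcription of the heavy-phase argument, so I do not expect a real obstacle. The one point to state carefully is that the disjoint-support claim relies on \emph{both} the $\{0,1\}$ consumption structure and the normalization $\mathbf{b}_{j}=\mathbf{1}$, which is exactly what defines $\{0,1\}$-VGAP.
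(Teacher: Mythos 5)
Your proof is correct and follows exactly the route the paper intends: the paper's own ``proof'' is just the remark that the argument of Lemma~\ref{Heavy phase lemma} goes through with $d$ replaced by $\sqrt{d}$ via the at-most-$\sqrt{d}$-items-per-bin observation, and you have filled in precisely those details, including the correct justification (disjoint supports of size at least $\sqrt{d}$ under the $\{0,1\}$ normalization).
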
 
The proof is similar to the proof of Lemma \ref{Heavy phase lemma} by using the observation above. 

\begin{lem}
\label{Sparse phase lemma}
For $ q_{1}n + 1 \leq \ell \leq q_{2}n$, we have
\[
\E\left[R_{\ell}\right] \geq \frac{q_{1}}{q_{2}}\left(1-\sqrt{d}\sum_{k=q_{2}n+1}^{\ell-1}\frac{1}{k}\right) \frac{1}{n}\OPT\left(\mathcal{I}_{sparse}\right).
\]
\end{lem}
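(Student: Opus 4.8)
The plan is to follow the proof of Lemma~\ref{Light phase lemma} almost verbatim, with the dense/sparse partition in place of the heavy/light one; the single new ingredient is that a sparse packing option has support of size strictly less than $\sqrt d$. The first step, bounding the expected profit of the tentative assignment, is literally the first part of that proof: fix an optimal solution $x^{*}$ of $\mathcal{I}_{sparse}$ and let $x^{(\ell)}$ be the optimal fractional solution of the LP-relaxation of $\mathcal{I}_{sparse}|_{S_{\ell}}$ from which $j_{\ell}$ is drawn by randomized rounding. Then the facts that $i_{\ell}$ is uniform in $S_{\ell}$, that $p^{T}x^{(\ell)}\ge p^{T}(x^{*}|_{S_{\ell}})$ (optimality of $x^{(\ell)}$, feasibility of $x^{*}|_{S_{\ell}}$ for that LP), and that $\E[p^{T}(x^{*}|_{S_{\ell}})]=\frac{\ell}{n}\OPT(\mathcal{I}_{sparse})$ give $\E[p_{i_{\ell},j_{\ell}}]=\frac1n\OPT(\mathcal{I}_{sparse})$.

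Next I would identify a sufficient condition for the tentative assignment $(i_{\ell},j_{\ell})$ to be carried out. Condition on $(i_{\ell},j_{\ell})=(i,j)$ and set $T=\supp(\mathbf{w}_{i,j})$, so $|T|<\sqrt d$. Write $c(j,t,\ell)$ for the total consumption of tentative assignments to bin $j$ in coordinate $t$ during the sparse phase and before round $\ell$, as in Lemma~\ref{Light phase lemma}. Since every capacity equals $\mathbf 1$ and every weight lies in $\{0,1\}^{d}$, the assignment succeeds whenever (i) no item was packed in $j$ during the dense phase (event $H_{\ell}$) and (ii) $c(j,t,\ell)<1$ for every $t\in T$ (events $L_{\ell}^{t}$): coordinates outside $T$ are untouched by $i$ and $\PACK_{\ell-1}$ is feasible, while for $t\in T$ conditions (i)--(ii) force the load of $j$ in coordinate $t$ under $\PACK_{\ell-1}$ to be $0$, so inserting $i$ brings it to exactly $b_{j}^{t}=1$. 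This is where the $\{0,1\}$ structure enters: there is no half-capacity cushion as in the light phase, but since $c(j,t,\ell)$ is integer-valued the bound $c(j,t,\ell)<1$ is all that is needed, and it is needed only on the fewer than $\sqrt d$ coordinates of $T$.

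Then I would bound the failure probabilities as in Lemma~\ref{Light phase lemma}, with the stated improvement. At each sparse-phase round $k<\ell$ the load of $j$ in coordinate $t$ in the LP solution of $\mathcal{I}_{sparse}|_{S_{k}}$ is at most $b_{j}^{t}=1$, so the expected consumption of the round-$k$ item from $j$ in coordinate $t$ is at most $1/k$; hence $\E[c(j,t,\ell)]\le\sum_{k=q_{2}n+1}^{\ell-1}1/k$ and, by Markov, $\Pr[\neg L_{\ell}^{t}]\le\sum_{k=q_{2}n+1}^{\ell-1}1/k$. A union bound over the fewer than $\sqrt d$ coordinates of $T$ --- this is precisely where $\sqrt d$ replaces the $2d$ of Lemma~\ref{Light phase lemma} --- gives $\Pr[\bigwedge_{t\in T}L_{\ell}^{t}]\ge 1-\sqrt d\sum_{k=q_{2}n+1}^{\ell-1}1/k$. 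The event $H_{\ell}$ is handled as the corresponding event in Lemmas~\ref{Heavy phase lemma} and~\ref{Light phase lemma}: the dense-phase matching assigns at most one item to $j$ each round and that item is uniform in $S_{k}$, so $\Pr[H_{\ell}]\ge\prod_{k=q_{1}n+1}^{q_{2}n}(1-1/k)=q_{1}/q_{2}$, and $H_{\ell}$ depends only on the dense phase, hence is independent of $\bigwedge_{t\in T}L_{\ell}^{t}$. Multiplying, the bound $\Pr[\text{success}\mid i_{\ell}=i,j_{\ell}=j]\ge\frac{q_{1}}{q_{2}}\bigl(1-\sqrt d\sum_{k=q_{2}n+1}^{\ell-1}1/k\bigr)$ does not depend on $(i,j)$; combining it with the profit bound of the first step (using $p_{i_{\ell},j_{\ell}}\ge 0$) yields $\E[R_{\ell}]\ge\frac{q_{1}}{q_{2}}\bigl(1-\sqrt d\sum_{k=q_{2}n+1}^{\ell-1}1/k\bigr)\frac1n\OPT(\mathcal{I}_{sparse})$.

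The one step that really needs care is the feasibility condition (i)--(ii): in the $\{0,1\}$ regime a coordinate is lost irreversibly once its load reaches $1$, so one must verify that controlling only the support coordinates of $\mathbf{w}_{i_{\ell},j_{\ell}}$ is enough. The conditioning/independence claims --- that $\E[c(j,t,\ell)]$ and $\Pr[H_{\ell}]$ may be estimated as above irrespective of what happened in earlier rounds --- are the same delicate-but-routine bookkeeping already present in Lemmas~\ref{Heavy phase lemma} and~\ref{Light phase lemma}, and are most cleanly justified by revealing the arrival order backwards; everything else is identical to the light-phase proof.
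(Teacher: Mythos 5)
Your proposal is correct and follows essentially the same route as the paper's proof: the same expected-profit bound, the same sufficient condition (no dense-phase packing in $j_\ell$ plus untouched support coordinates of $\mathbf{w}_{i_\ell,j_\ell}$), and the same $q_1/q_2$ and $1-\sqrt{d}\sum_{k}1/k$ factors. The only cosmetic difference is that you apply Markov's inequality per coordinate and then union-bound over the fewer than $\sqrt{d}$ support coordinates, whereas the paper union-bounds over coordinates within each round and then over rounds --- numerically the identical first-moment bound.
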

\begin{proof}
As in Lemma~\ref{Light phase lemma}, we have $\E\left[p_{i_{\ell},j_{\ell}}\right]=\frac{1}{n}\OPT\left(\mathcal{I}_{sparse}\right)$
where the expectation is taken only over the random choice of the subset $S_{\ell}\subseteq\left[n\right]$, the random choice of $i_{\ell}\in S_{\ell}$ and the internal randomness of the algorithm at round $\ell$. Once again we bound the probability of successful assignment over the random arrival order of previous items and the internal randomness of the algorithm in previous rounds. The assignment of $i_{\ell}$ to $j_{\ell}$ must be successful if the following conditions hold: (1) no item was packed in $j_{\ell}$ during the dense phase, and (2) no tentative assignments from previous rounds of the sparse phase occupy the entries in $\supp\left(\mathbf{w}_{i_{\ell}, j_{\ell}}\right)$ of $j_{\ell}$. Let us denote event (1) by $H_\ell$ and the event described in (2) by $L_{\ell}$. At round $q_{2}n \leq k \leq \ell$ the algorithm uses an optimal fractional solution $x^{\left(k\right)}$ for the LP-relaxation on $\mathcal{I}|_{S_{k}}$ to compute a tentative assignment $\left(i_{k},j_{k}\right)$. In that solution we have $\sum_{i\in S_{k}}x_{i,j_{\ell}}^{\left(k\right)}w_{i,j_{\ell}}^{t}\leq1$, $\forall t\in\left[d\right]$.
Observe that by the randomized rounding at round $k$ and the fact that $w_{i_{k},j_{\ell}}^{t} \in \left\{ 0,1 \right\}$, the probability that the tentative assignment of $i_{k}$ uses dimension $t$ in $j_{\ell}$ is given by
\[
\sum_{i \in S_k} \Pr[ j_{k} = j_{\ell} \wedge w_{i,j_k}^{t} = 1 |i_k = i]\cdot\Pr[i_{k}=i] = 
 \frac{1}{k} \sum_{i \in S_k} x_{i,j_{\ell}}^{\left(k\right)} w_{i,j_{\ell}}^{t} \leq \frac{1}{k}.
\]
Using a union bound, since $\left|\supp\left(\mathbf{w}_{i_{\ell},j_{\ell}}\right) \right| \leq \sqrt{d}$, the probability that $i_{k}$ blocks $i_{\ell}$ from being packed is at most $\sqrt{d}/k$. Applying a union bound once again over all previous rounds of the sparse phase, we get
\[
\Pr[L_\ell]\geq 1-\sum_{k=q_2n + 1}^{\ell -1} {\frac{\sqrt{d}}{k}}.
\]
From here on we can follow a similar argument as in the proof of Lemma~\ref{Light phase lemma} and get
\begin{align*}
\Pr \left[ \text{successful assignment} \right] & \geq 
\Pr \left[H_{\ell} \wedge L_{\ell} \right] \\
& \geq \prod_{k=q_{1}n+1}^{q_{2}n}\left(1-\frac{1}{k}\right) \left( 1-\sum_{k=q_2n + 1}^{\ell -1} {\frac{\sqrt{d}}{k}} \right)\\
& = \frac{q_{1}}{q_{2}}\left(1-\sqrt{d}\sum_{k=q_{2}n+1}^{\ell-1}\frac{1}{k}\right).
\end{align*}
Overall, we get the lemma. 
\end{proof}

\begin{thm}
\label{Therorem_0_1gap_competitive} For $q_2 = \sqrt{d}/\left(\sqrt{d}+1\right)$, $q_1 = q_2/\sqrt{e}$ Algorithm~\ref{0_1 VGAP Algorithm} is $2\sqrt{e}\left(\sqrt{d}+2\right)$-competitive.
\end{thm}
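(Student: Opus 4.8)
The plan is to follow the proof of Theorem~\ref{Therorem_gap_competitive} almost verbatim, with two substitutions dictated by the new partition: in the dense phase the per-round loss factor $d$ (``at most $d$ items per bin'') becomes $\sqrt d$ via Lemma~\ref{Dense phase lemma}, and in the sparse phase the blocking factor $2d$ becomes $\sqrt d$ via Lemma~\ref{Sparse phase lemma}. Concretely, I would write $\E[\ALG]=\sum_{\ell=1}^{n}\E[R_\ell]$ and split the sum into the dense phase $q_1n+1\le\ell\le q_2n$ and the sparse phase $\ell\ge q_2n+1$, bounding each block by the corresponding lemma.

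For the dense phase, summing Lemma~\ref{Dense phase lemma} and using $\sum_{\ell=q_1n}^{q_2n-1}\tfrac1\ell\ge\int_{q_1n}^{q_2n}\tfrac{dx}{x}=\ln(q_2/q_1)$ gives
\[
\sum_{\ell=q_1n+1}^{q_2n}\E[R_\ell]\ \ge\ \OPT(\mathcal{I}_{dense})\,\frac{q_1}{\sqrt d}\,\ln\!\Big(\frac{q_2}{q_1}\Big).
\]
For the sparse phase, summing Lemma~\ref{Sparse phase lemma}, interchanging the order of the double summation exactly as in Theorem~\ref{Therorem_gap_competitive}, and using $\sum_{k=q_2n+1}^{n}\tfrac1k\le\ln(1/q_2)$ gives
\[
\sum_{\ell=q_2n+1}^{n}\E[R_\ell]\ \ge\ \OPT(\mathcal{I}_{sparse})\,\frac{q_1}{q_2}\Big((\sqrt d+1)(1-q_2)-\sqrt d\,\ln\tfrac{1}{q_2}\Big).
\]

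Next I would substitute $q_2=\sqrt d/(\sqrt d+1)$ and $q_1=q_2/\sqrt e$. The dense coefficient becomes $\tfrac{q_1}{\sqrt d}\ln(q_2/q_1)=\tfrac{1}{2\sqrt e(\sqrt d+1)}\ge\tfrac{1}{2\sqrt e(\sqrt d+2)}$. For the sparse coefficient, $(\sqrt d+1)(1-q_2)=1$ and $1/q_2=1+1/\sqrt d$, so the bracket equals $1-\sqrt d\ln(1+1/\sqrt d)$; applying $\ln(1+x)\le x-\tfrac12x^2+\tfrac13x^3$ with $x=1/\sqrt d$ lower-bounds it by $\tfrac{1}{2\sqrt d}-\tfrac1{3d}$, and since $q_1/q_2=1/\sqrt e$ the sparse coefficient is at least $\tfrac{1}{\sqrt e}\big(\tfrac{1}{2\sqrt d}-\tfrac1{3d}\big)$. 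Writing $s=\sqrt d\ge1$, the inequality $\tfrac{1}{2s}-\tfrac1{3s^2}\ge\tfrac{1}{2(s+2)}$ is equivalent to $8s-8\ge0$ and hence holds for all $d\ge1$, so the sparse coefficient is also at least $\tfrac{1}{2\sqrt e(\sqrt d+2)}$. Finally, splitting an optimal solution of $\mathcal I$ into its dense and sparse assignments yields two feasible solutions for $\mathcal{I}_{dense}$ and $\mathcal{I}_{sparse}$ whose profits add to $\OPT(\mathcal I)$, so $\OPT(\mathcal{I}_{dense})+\OPT(\mathcal{I}_{sparse})\ge\OPT(\mathcal I)$; adding the two phase bounds then gives $\E[\ALG]\ge\tfrac{1}{2\sqrt e(\sqrt d+2)}\OPT$.

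Since Lemma~\ref{Dense phase lemma} and Lemma~\ref{Sparse phase lemma} are already established, the theorem proof is essentially a calculation, and there is no real obstacle — only a few places to be careful: the interchange of the double sum in the sparse phase (so that $\sum_{\ell,k}\tfrac1k$ collapses to $\sum_k(\tfrac nk-1)$), the Taylor bound on $\ln(1+1/\sqrt d)$, and checking that the resulting elementary inequalities hold for \emph{every} $d\ge1$, including the boundary case $d=1$ where $\sqrt d=1$ and the inequality $8s-8\ge0$ is tight. As with the one-dimensional remark following Theorem~\ref{Therorem_gap_competitive}, I would note that the chosen values of $q_1,q_2$ are picked to give a clean closed form rather than to optimize the underlying expression.
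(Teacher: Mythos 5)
Your proposal is correct and follows essentially the same route as the paper's proof: sum Lemmas~\ref{Dense phase lemma} and~\ref{Sparse phase lemma} over the two phases with the same integral bounds, plug in $q_2=\sqrt d/(\sqrt d+1)$ and $q_1=q_2/\sqrt e$, apply the cubic Taylor bound on $\ln(1+1/\sqrt d)$, and combine via $\OPT(\mathcal{I}_{dense})+\OPT(\mathcal{I}_{sparse})\geq\OPT(\mathcal{I})$. The only addition is that you spell out the elementary inequality $\tfrac{1}{2\sqrt d}-\tfrac{1}{3d}\geq\tfrac{1}{2(\sqrt d+2)}$ (reducing it to $8\sqrt d-8\geq 0$, tight at $d=1$), which the paper leaves as ``easily verified.''
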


\begin{proof}
As in the proof of Theorem~\ref{Therorem_gap_competitive}, by using Lemma~\ref{Dense phase lemma} and Lemma~\ref{Sparse phase lemma} to sum over the profit raised in each phase, we get
\[
\E\left[\ALG\right]\geq\OPT\left(\mathcal{I}_{dense}\right)\frac{q_{1}}{\sqrt{d}}\ln\left(\frac{q_{2}}{q_{1}}\right)+
\OPT\left(\mathcal{I}_{sparse}\right)\frac{q_{1}}{q_{2}}\left(\left(\sqrt{d}+1\right)\left(1-q_{2}\right)-\sqrt{d}\ln\left(\frac{1}{q_{2}}\right)\right).
\]
Setting $q_{2}=\frac{\sqrt{d}}{\sqrt{d}+ 1}, q_{1}=q_{2}/\sqrt{e}$, we get
\begin{align*}
\E\left[\ALG\right] &\geq   \OPT\left(\mathcal{I}_{dense}\right) \frac{1}{2\sqrt{e}\left( \sqrt{d}+1 \right)} +  \OPT\left(\mathcal{I}_{sparse}\right)\frac{1}{\sqrt{e}} \left( 1-\sqrt{d}\ln\left( 1+ \frac{1}{\sqrt{d}} \right) \right).
\end{align*}
To bound the second term we use the fact that for $x\geq0$, $\ln(1+x) \leq x- \frac{1}{2}x^2 + \frac{1}{3}x^3$ and get
\begin{align*}
 \frac{1}{\sqrt{e}} \left( 1-\sqrt{d}\ln\left( 1+ \frac{1}{\sqrt{d}} \right) \right)
\geq \frac{1}{\sqrt{e}} \left( \frac{1}{2\sqrt{d}} - \frac{1}{3d} \right) 
\geq \frac{1}{\sqrt{e}} \left( \frac{1}{2\sqrt{d}+4}\right).
\end{align*}
The second inequality can be easily verified to holds for $d\geq 1$.
Using it and the fact that $\OPT\left(\mathcal{I}_{dense}\right)+\OPT\left(\mathcal{I}_{sparse}\right)\geq\OPT\left(\mathcal{I}\right)$, we get
\begin{align*}
\E\left[\ALG\right] &\geq  \OPT\left(\mathcal{I}_{dense}\right) \frac{1}{2\sqrt{e}\left( \sqrt{d}+1 \right)} +  \OPT\left(\mathcal{I}_{sparse}\right)\frac{1}{2\sqrt{e}\left(\sqrt{d}+2\right)} \geq  \frac{1}{2\sqrt{e}\left(\sqrt{d}+2\right)} \OPT. \qedhere
\end{align*}
\end{proof}

\section{Vector Multiple Knapsack Problem }

The \emph{Vector Multiple Knapsack Problem} (VMKP) is a special
case of VGAP in which all bins have a capacity of $\mathbf{1}$,
every packing option of item $i$ consumes the same amount of capacity
$\mathbf{w}_{i}\in \left[ 0,1\right]^{d}$
and provides the same profit $p_{i}\geq0$, i.e., $\mathbf{w}_{i,j}=\mathbf{w}_{i}$,
$p_{i,j}=p_{i}$, $\forall i\in\left[n\right]$, $\forall j\in\left[m\right]$.
We study the case where there are at least two bins, i.e., $m\geq2$.
For this special case we present an online algorithm that improves
upon the competitive-ratio of Algorithm~\ref{VGap Algorithm}. 

\IncMargin{1em}
\begin{algorithm}
\label{MKS Algorithm}
\SetKwInOut{Input}{input}\SetKwInOut{Output}{output}
\caption{Online VMKP}

$S{}_{0}\leftarrow\emptyset$, $\PACK_{0}\leftarrow\emptyset$\;

\For{each item $i_{\ell}$ that arrives at round $\ell$} {
    $S_{\ell}\leftarrow S_{\ell-1}\cup\left\{ i_{\ell}\right\} $\;
    \uIf(\tcc*[f]{sampling phase}){$\ell\leq qn$} {
		continue to the next round\;
    }
    \Else( \tcp*[h]{ {$\ell \geq qn + 1$}  } \tcc*[f]{packing phase} ) {

        Let $x^{\left(\ell\right)}$ be an optimal fractional solution for the LP-relaxation of $\mathcal{I}|_{S_{\ell}}$;

		Choose $j_{\ell}$ randomly where $\Pr\left[j_{\ell}=j\right]=x_{i_{\ell},j}^{\left(\ell\right)}$ and $\Pr\left[j_{\ell}=0\right]=1-{\displaystyle \sum_{j\in\left[m\right]}}x_{i_{\ell},j}^{\left(\ell\right)}$\;

		\tcp*[h]{First Fit}

		Let $B_{\ell} = \left\{ j\in[m] : \PACK_{\ell} \cup \left\{\left(i_\ell,j\right)\right\} \text{is feasible} \right\}$\;
		\uIf{ $j_\ell \neq 0$ {\bf and} $B_{\ell} \neq \emptyset$ } {
			$ \PACK_{\ell} \leftarrow \PACK_{\ell - 1}\cup\left\{ \left(i_{\ell},\min{B_{\ell}} \right) \right\}$\;
		}
	}
}
\Return{ $\PACK_{n}$ }
\end{algorithm}
\DecMargin{1em}

Algorithm~\ref{MKS Algorithm} consists of two phases: a sampling
phase and a packing phase. The packing phase is similar to the light
phase of Algorithm~\ref{VGap Algorithm}, however, instead of using
the LP-solution to compute a tentative assignment, it uses it only
to make a binary decision whether to pack the current item or not.
For the actual packing, it exploits the fact that all packing options
are identical and uses the First Fit algorithm~\cite{garey1976resource}.

We now analyze the performance of Algorithm~\ref{MKS Algorithm}.
First we prove a simple observation due to the nature of First Fit. 
\begin{lem}
\label{First Fit lemma}For $\ell\geq qn+1$ and $m\geq2$, if $i_{\ell}$
cannot be packed in any bin, then
\[
\sum_{\left(i,j\right)\in P_{\ell-1}}\sum_{t=1}^{d}w_{i}^{t}\geq \frac{m}{2}.
\]
\end{lem}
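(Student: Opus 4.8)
The plan is to translate everything into statements about bin loads and then exploit the defining property of First Fit over \emph{all} pairs of bins, not merely a disjoint matching of them.

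First I would record a simple but crucial observation: if $i_{\ell}$ cannot be packed in any bin, then no bin is empty in $\PACK_{\ell-1}$. Indeed, every bin has capacity $\mathbf{1}$ and $\mathbf{w}_{i_{\ell}}\in[0,1]^{d}$, so placing $i_{\ell}$ into an empty bin is always feasible; hence the assumption ``all bins are non-empty'' is free for the rest of the argument. Next I would fix notation: for a bin $j$ let $a_{j}^{t}=\sum_{(i,j)\in\PACK_{\ell-1}}w_{i}^{t}$ be its load in dimension $t$ and $\|a_{j}\|_{1}=\sum_{t=1}^{d}a_{j}^{t}$ its total load. Since in VMKP every packing option of item $i$ carries the same weight vector $\mathbf{w}_{i}$, we have the identity $\sum_{(i,j)\in\PACK_{\ell-1}}\sum_{t=1}^{d}w_{i}^{t}=\sum_{j=1}^{m}\|a_{j}\|_{1}$, so it suffices to prove $\sum_{j=1}^{m}\|a_{j}\|_{1}\ge m/2$.

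The core step is the standard First Fit property, adapted to the vector setting: for any two bins $j_{1}<j_{2}$ we have $\|a_{j_{1}}\|_{1}+\|a_{j_{2}}\|_{1}>1$. To see this, let $i^{\ast}$ be the item assigned earliest (over the online rounds) to bin $j_{2}$ in $\PACK_{\ell-1}$; it exists because $j_{2}$ is non-empty. When $i^{\ast}$ was processed, First Fit chose $j_{2}=\min B$, so every bin of smaller index, in particular $j_{1}$, was infeasible at that moment; hence there is a dimension $s$ with (load of $j_{1}$ in dimension $s$ at that time)$\,+\,w_{i^{\ast}}^{s}>1$. Loads are non-decreasing over time, so $a_{j_{1}}^{s}>1-w_{i^{\ast}}^{s}$, and since $i^{\ast}$ is assigned to $j_{2}$ we also have $a_{j_{2}}^{s}\ge w_{i^{\ast}}^{s}$; adding these and using non-negativity of the remaining coordinates gives $\|a_{j_{1}}\|_{1}+\|a_{j_{2}}\|_{1}\ge a_{j_{1}}^{s}+a_{j_{2}}^{s}>1$.

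Finally I would sum this inequality over all $\binom{m}{2}$ unordered pairs of bins. Each bin occurs in exactly $m-1$ pairs, so the left-hand side telescopes to $(m-1)\sum_{j=1}^{m}\|a_{j}\|_{1}$, while the right-hand side sums to $\binom{m}{2}=\tfrac{m(m-1)}{2}$; dividing by $m-1>0$ (this is where $m\ge 2$ enters) yields $\sum_{j=1}^{m}\|a_{j}\|_{1}>m/2$, as desired. The step I expect to be the crux is precisely this aggregation idea: one must apply the pairwise First Fit bound to \emph{every} pair of bins rather than to a disjoint pairing (the naive disjoint-pairing argument only gives $\lfloor m/2\rfloor$, which falls short of $m/2$ for odd $m$), and for that one needs the preliminary observation that the ``cannot be packed'' hypothesis forces all $m$ bins to be non-empty so that the pairwise inequality is available for all of them.
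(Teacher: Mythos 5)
Your proof is correct and follows essentially the same route as the paper: both establish the pairwise First Fit inequality $\|a_{j_1}\|_1+\|a_{j_2}\|_1>1$ via the first item placed in the higher-indexed bin, and both then aggregate it so that every bin is counted a uniform number of times (the paper sums over the $m$ cyclically consecutive pairs, each bin appearing twice, while you sum over all $\binom{m}{2}$ pairs, each bin appearing $m-1$ times — an equivalent bookkeeping choice yielding the same bound $m/2$). Your preliminary remark that the hypothesis forces every bin to be non-empty is also present in the paper's proof, so there is nothing to flag.
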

\begin{proof}
Let $u\left(j,t,\ell\right)$ denote the total consumption of bin
$j$ in dimension $t$ before round $\ell$. Since $i_{\ell}$ cannot
be packed in any bin, there is at least one item packed in each bin.
Consider any two bins $j'>j$, and let $i_{k}$ be the first item
that was packed in $j'$. $i_{k}$ could not be packed in bin $j$,
therefore, for some $t'\in\left[d\right]$ we have $u\left(j,t',k\right)+w_{i_{k}}^{t'}>1$.
Since the consumption is non-decreasing and $u\left(j',t',\ell\right)\geq w_{i_{k}}^{t'}$
we have $u\left(j,t',\ell\right)+u\left(j',t',\ell\right)>1$, therefore,
$\sum_{t=1}^{d}u\left(j,t,\ell\right)+u\left(j',t,\ell\right)>1$.
By summing the last inequality for all consecutive pairs of bins $\left(j+1,j\right)$
as well as $\left(m,1\right)$ we get $2\sum_{j=1}^{m}\sum_{t=1}^{d}u\left(j,t,\ell\right)>m$
and hence the lemma. 
\end{proof}
Next, we follow the method of the previous section to bound the expected
profit of the algorithm at each round.
\begin{lem}
\label{MKP_profit} For $\ell\geq qn+1$ and $m\geq2$, we have 
\[
\E\left[R_{\ell}\right]\geq\left(1-2d\sum_{k=qn+1}^{\ell-1}\frac{1}{k}\right)\frac{1}{n}\OPT.
\]
\end{lem}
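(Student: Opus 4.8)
The plan is to mirror the structure of the proofs of Lemma~\ref{Heavy phase lemma} and Lemma~\ref{Light phase lemma}: separate the expected profit at round $\ell$ into the expected profit of the \emph{tentative} assignment $(i_\ell, j_\ell)$ and the probability that this tentative assignment is actually carried out by First Fit, then multiply.

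First I would establish the ``tentative profit'' bound. Since at round $\ell \geq qn+1$ the algorithm picks $j_\ell$ by randomized rounding from an optimal fractional LP-solution $x^{(\ell)}$ for $\mathcal{I}|_{S_\ell}$, exactly as in Lemma~\ref{Light phase lemma} we have $\E[p_{i_\ell, j_\ell}] = \frac{1}{\ell}\E[p^T x^{(\ell)}] \geq \frac{1}{\ell}\E[p^T(x^*|_{S_\ell})] = \frac{1}{n}\OPT$, where $x^*$ is an optimal solution for the whole instance $\mathcal{I}$ and the expectation is over the random set $S_\ell$, the random choice of $i_\ell \in S_\ell$, and the internal randomness at round $\ell$. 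Note here the optimum is the global $\OPT$ (no heavy/light split), since Algorithm~\ref{MKS Algorithm} does not partition the instance.

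Next I would lower-bound the probability that the packing succeeds, i.e.\ that $j_\ell \neq 0$ leads to an actual insertion. By Lemma~\ref{First Fit lemma}, if $i_\ell$ cannot be packed in \emph{any} bin then the total consumption already committed satisfies $\sum_{(i,j)\in P_{\ell-1}}\sum_{t=1}^d w_i^t \geq m/2$. So it suffices to show this total committed consumption is, in expectation, small compared to $m/2$, and then apply Markov. For each dimension $t$ and each round $k$ with $qn+1 \le k \le \ell-1$, the fractional LP-solution $x^{(k)}$ on $\mathcal{I}|_{S_k}$ obeys $\sum_{i\in S_k} x^{(k)}_{i,j} w_i^t \le 1$ for every bin $j$; summing over the $m$ bins gives $\sum_{j}\sum_{i\in S_k} x^{(k)}_{i,j} w_i^t \le m$. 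Since $i_k$ is a uniformly random element of $S_k$, the expected consumption added at round $k$ in dimension $t$ (to whichever bin it lands in) is at most $m/k$, independent of the arrival order and randomness of rounds $1,\dots,k-1$. Summing over $t\in[d]$ and over $k$ from $qn+1$ to $\ell-1$ yields $\E\!\left[\sum_{(i,j)\in P_{\ell-1}}\sum_{t=1}^d w_i^t\right] \le d\sum_{k=qn+1}^{\ell-1} m/k$. By Markov's inequality, the probability that this sum is $\ge m/2$ is at most $2d\sum_{k=qn+1}^{\ell-1}\frac{1}{k}$, hence $\Pr[\text{successful}] \ge 1 - 2d\sum_{k=qn+1}^{\ell-1}\frac{1}{k}$, and crucially this event depends only on rounds $\le \ell-1$ and is therefore independent of the choice of $i_\ell$ and the round-$\ell$ randomness that control the tentative profit. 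Multiplying the two bounds gives $\E[R_\ell] \ge \left(1 - 2d\sum_{k=qn+1}^{\ell-1}\frac{1}{k}\right)\frac{1}{n}\OPT$.

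The main subtlety — and the place I'd be most careful — is the independence/conditioning argument that lets us multiply the expected tentative profit by the success probability. One must argue, as in the earlier lemmas, that the event ``$P_{\ell-1}$ has committed consumption $< m/2$ in the relevant sense'' is determined by rounds $1,\dots,\ell-1$ only, while $\E[p_{i_\ell,j_\ell}\mid \text{round }{<}\ell]$ equals $\frac{1}{n}\OPT$ regardless of what happened before; here it is cleanest to condition on the set $S_{\ell-1}$ and the history, observe the conditional success probability is still $\ge 1 - 2d\sum \frac1k$ (the Markov bound holds conditionally too, after re-deriving the per-round $m/k$ bound conditionally), and then take expectations. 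The other minor point is that First Fit may place $i_\ell$ in a bin different from $j_\ell$; this only helps — the LP value of the tentative choice $j_\ell$ still lower-bounds the realized profit $p_{i_\ell} = p_{i_\ell,\min B_\ell}$ since profits are bin-independent in VMKP — so no loss is incurred there.
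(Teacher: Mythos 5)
Your proposal is correct and follows essentially the same route as the paper's proof: the same tentative-profit bound of $\frac{1}{n}\OPT$ via randomized rounding from the local LP, the same use of Lemma~\ref{First Fit lemma} to reduce success to the event that the committed consumption is below $m/2$, the same per-round bound of $dm/k$ on expected consumption, and the same Markov-inequality step. Your added remarks on the independence/conditioning subtlety and on First Fit possibly choosing a bin other than $j_\ell$ (harmless since profits are bin-independent in VMKP) are points the paper leaves implicit, but they do not change the argument.
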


\begin{proof}
By following a similar argument to that in Lemma~\ref{Heavy phase lemma},
we get $\E\left[p_{i_{\ell},j_{\ell}}\right]\geq\frac{1}{n}\OPT$
for $\ell\geq qn+1$. We now bound the probability that $\sum_{\left(i,j\right)\in P_{\ell-1}}\sum_{t=1}^{d}w_{i}^{t}<m/2$, by Lemma~\ref{First Fit lemma}, this is a sufficient condition for the  assignment of
$i_{\ell}$ to be successful. At round $k<\ell$ the algorithm computes
a tentative assignment based on an optimal fractional solution $x^{\left(k\right)}$
for the LP-relaxation of $\mathcal{I}|_{S_{k}}$, therefore, we have
$\sum_{i\in S_{k}}\sum_{t=1}^{d}\sum_{j=1}^{m}x_{i,j}^{\left(k\right)}w_{i}^{t} \leq d m$.
Since $i_{k}$ is a uniformly random item of $S_{k}$, we have $\E\left[\sum_{t=1}^{d}\sum_{j=1}^{m}x_{i_{k},j}^{\left(k\right)}w_{i_{k}}^{t}\right]\leq {d m}/{k}$,
hence, 
\[
\E\left[\sum_{\left(i,j\right)\in P_{\ell-1}}\sum_{t=1}^{d}w_{i}^{t}\right]
\leq \E\left[\sum_{k=qn+1}^{\ell-1}\sum_{t=1}^{d}w_{i_{k}}^{t}\sum_{j=1}^{m}{x_{i_{k},j}^{\left(k\right)}}\right]
= \sum_{k=qn+1}^{\ell-1}\E\left[\sum_{t=1}^{d}\sum_{j=1}^{m}{x_{i_{k},j}^{\left(k\right)}w_{i_{k}}^{t}}\right]
\leq\sum_{k=qn+1}^{\ell-1}\frac{d m}{k}.
\]
As before, we can now use Markov's inequality to bound the probability
of successful assignment and get the lemma.
\end{proof}

\begin{thm}
For $q=2d/(2d+1)$, Algorithm~\ref{MKS Algorithm} is $\left( 4d + 2 \right)$-competitive.
\end{thm}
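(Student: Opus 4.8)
The plan is to mirror the light-phase computation from the proof of Theorem~\ref{Therorem_gap_competitive}, but now over a single packing phase and with a single $\OPT$ (there is no heavy/light or dense/sparse partition here). First I would write $\E\left[\ALG\right]=\sum_{\ell=qn+1}^{n}\E\left[R_{\ell}\right]$ and apply Lemma~\ref{MKP_profit} to each term, which requires only $m\geq2$ and $\ell\geq qn+1$ — both satisfied throughout the packing phase — to obtain
\[
\E\left[\ALG\right]\geq\frac{1}{n}\OPT\sum_{\ell=qn+1}^{n}\left(1-2d\sum_{k=qn+1}^{\ell-1}\frac{1}{k}\right).
\]

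Next I would evaluate the double sum by swapping the order of summation: the term $1/k$ appears once for each $\ell\in\{k+1,\dots,n\}$, so $\sum_{\ell=qn+1}^{n}\sum_{k=qn+1}^{\ell-1}\frac{1}{k}=\sum_{k=qn+1}^{n}\frac{n-k}{k}=n\sum_{k=qn+1}^{n}\frac{1}{k}-(1-q)n$. Substituting this, the sum becomes $(1-q)n(2d+1)-2dn\sum_{k=qn+1}^{n}\frac{1}{k}$, and bounding $\sum_{k=qn+1}^{n}\frac{1}{k}\leq\int_{qn}^{n}\frac{dx}{x}=\ln(1/q)$ gives
\[
\E\left[\ALG\right]\geq\OPT\left((2d+1)(1-q)-2d\ln(1/q)\right).
\]

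Then I would plug in $q=2d/(2d+1)$, so that $1-q=1/(2d+1)$ and $1/q=1+1/(2d)$; the first term becomes exactly $1$, leaving $\E\left[\ALG\right]\geq\OPT\left(1-2d\ln(1+1/(2d))\right)$. Applying the inequality $\ln(1+x)\leq x-\frac{1}{2}x^{2}+\frac{1}{3}x^{3}$ for $x\geq0$ with $x=1/(2d)$ yields $1-2d\ln(1+1/(2d))\geq\frac{1}{4d}-\frac{1}{12d^{2}}$, and the elementary fact $\frac{1}{4d}-\frac{1}{12d^{2}}\geq\frac{1}{4d+2}$ for $d\geq1$ (already used in Theorem~\ref{Therorem_gap_competitive}) finishes the bound $\E\left[\ALG\right]\geq\OPT/(4d+2)$, i.e.\ the algorithm is $(4d+2)$-competitive.

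I do not expect a genuine obstacle here: every step essentially reproduces a computation already carried out in the light-phase part of Theorem~\ref{Therorem_gap_competitive}, with $q_{1}/q_{2}$ replaced by $1$ (the packing phase follows the sampling phase directly). The only point demanding a little care is the index bookkeeping in the summation swap — in particular that $\ell$ starts at $qn+1$, so the inner sum is empty when $\ell=qn+1$ — and the tacit assumption, as elsewhere in the paper, that $qn$ is integral.
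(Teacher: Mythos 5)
Your proof is correct and follows essentially the same route as the paper's: sum Lemma~\ref{MKP_profit} over the packing phase, reduce to $\left(2d+1\right)\left(1-q\right)-2d\ln\left(1/q\right)$, substitute $q=2d/(2d+1)$, and finish with the cubic bound on $\ln(1+x)$ exactly as in Theorem~\ref{Therorem_gap_competitive}. The only differences are that you spell out the summation swap the paper leaves implicit, and your intermediate quantity $\frac{1}{4d}-\frac{1}{12d^{2}}$ is the correct one (the paper's displayed $\frac{1}{4d}+\frac{1}{12d^{2}}$ is a sign typo).
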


\begin{proof}
By Lemma~\ref{MKP_profit}, the overall profit of the algorithm is
bounded by 
\begin{align*}
\E\left[\text{ALG}\right] & = \sum_{\ell=qn+1}^{n}\E\left[R_{\ell}\right]\\
 & \geq  \sum_{\ell=qn+1}^{n}\left(1-2d\sum_{k=qn+1}^{\ell-1}\frac{1}{k}\right)\frac{1}{n}\OPT\\
 & \geq  \left(\left(2d+1\right)\left(1-q\right)-2d\ln\left(\frac{1}{q}\right)\right)\OPT.
\end{align*}
This bound is maximized for $q=2d/\left(2d+1\right)$, and for this choice of parameter, using similar arguments as in the proof of Theorem~\ref{Therorem_gap_competitive}, we get
\begin{align}
\E\left[\text{ALG}\right] &\geq \left(1-2d\ln\left(1+\frac{1}{2d}\right)\right)\OPT \label{eqn:VMK_competitive_ratio}\\
 &\geq  \left(\frac{1}{4d} + \frac{1}{12d^{2}} \right) \OPT \notag\\
 &\geq  \left( \frac{1}{4d + 2} \right)\OPT.\qedhere \notag
\end{align}
\end{proof}

Note that by setting $d=1$ in Inequality~\eqref{eqn:VMK_competitive_ratio} we get that Algorithm~\ref{MKS Algorithm} is $5.29$-competitive for the (one-dimensional) multiple knapsack problem with at least two bins.
\begin{rem}
For the special case of $d=1$, Algorithm~\ref{MKS Algorithm} can
be implemented in a more efficient way: instead of solving an LP-relaxation
at every round of the packing phase, we can obtain an optimal fractional
solution by using a simple greedy algorithm. 
\end{rem}
\begin{rem}
Algorithm~\ref{MKS Algorithm} can be extended to the case of variable-sized squared bins, that it, to the case where $\mathbf{b}_{j}=\mathbf{1}\cdot{b}_{j}$, $\forall j \in \left[m\right]$, under the assumption that every item fits into every bin, i.e., $w_{i}^{t}\leq b_{j}$ $\forall i\in\left[n\right],\forall j\in\left[m\right],\forall t\in\left[d\right]$,
 through sorting the bins by their capacity in a non-increasing
order.
\end{rem}

\section{Lower bound}
We now prove a lower bound of $\Omega(d)$ for the vector knapsack problem (VMKP with a single bin). Since it is a special case of VGAP, it shows our $O(d)$-competitive algorithm for VGAP from Section~\ref{VGAP section} is asymptotically optimal. Note that our lower bound holds without any complexity assumptions. In particular, it also applies to algorithms with unbounded computational power. The proof is inspired by the work of Babaioff et al.~\cite{DBLP:conf/soda/BabaioffIK07}.

We construct an instance for the $d$-dimensional knapsack problem consisting of one bin of capacity $\mathbf{1}$ and $n= \delta d^{\left(\delta + 1\right)d+ 1}$ items, where $\delta\in\mathbb{N}_{+}$. The weight vectors of the items are the columns of the following $d \times d$ matrices: 
\[
A_j = \left(1 - \epsilon j d^j\right) \cdot I + \epsilon j d^{j-1} \cdot \left(\mathbf{1}\mathbf{1}^{T} - I\right), \quad  \forall j\in[\delta d^{(\delta + 1)d}].
\]
Where $I$ is the $d \times d$ identity matrix, and $\epsilon < 1/ \left( 2nd^{n} \right) $. By the choice of $\epsilon$ it holds that $\epsilon j  d^{j} < 1/2$, $\forall j \in [\delta d^{(\delta + 1)d}]$.

Observe that for every matrix $A_j$, all the items that correspond to its columns fit together in the bin, that is, $A_j \cdot \mathbf{1}  \leq \mathbf{1}$. Also, every two columns of different matrices cannot be packed together. This is true because for any two matrices $A_i,A_j$ where $i>j$, and any two columns $k,\ell \in [d]$, we have
\[
 \left(A_j\right)_{k,k} + \left(A_i\right)_{k,\ell}  \geq \left(1 - \epsilon j d^{j}\right) +  \epsilon i  d^{i-1}  \geq 1 - \epsilon j d^{j} +  \epsilon \left(j+1\right)  d^{j} > 1.
\]
The first inequality follows from the fact that $\epsilon i  d^{i-1} < \left(1 - \epsilon i d^i\right) $, and the second inequality follows from the fact that $i \geq j+1$. Every item is independently assigned a profit of $1$ with probability $1/d^{\delta+1}$ and $0$ with probability $1-1/d^{\delta + 1}$.

\begin{thm} \label{lowerbound thm}
Any online algorithm produces a packing with expected profit of at most $\left (1 + \frac{1}{d^{\delta}} \right)$, while $\OPT=d$ with probability of at least $\left(1-\frac{1}{e^\delta}\right)$.
\end{thm}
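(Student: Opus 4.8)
I would split the argument into the two independent assertions — the value of $\OPT$ and the ceiling on any online algorithm — after first recording two structural facts read off directly from the construction: (i) by the two displayed inequalities, a feasible packing can use columns of only a \emph{single} matrix $A_j$, while all $d$ columns of any fixed $A_j$ do fit together; hence the profit of any packing equals the number of profit-$1$ columns, among those it selects, of the one matrix it uses, and (ii) consequently $\OPT = \max_j \bigl|\{k\in[d]:\text{column }k\text{ of }A_j\text{ has profit }1\}\bigr| \le d$, with equality iff some matrix is ``all-profitable.''

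\textbf{The value of $\OPT$.} The $n$ items split into $M := \delta d^{(\delta+1)d}$ disjoint groups of $d$ columns (one per matrix), and profits are i.i.d.\ across items, so the events ``$A_j$ is all-profitable'' are independent over $j$, each of probability $(d^{-(\delta+1)})^d = d^{-(\delta+1)d}$. Then
\[
\Pr[\OPT < d] = \bigl(1 - d^{-(\delta+1)d}\bigr)^{M} \le \exp\!\bigl(-M d^{-(\delta+1)d}\bigr) = e^{-\delta},
\]
and when some $A_j$ is all-profitable, packing all of its columns (feasible by (i)) achieves profit $d$; this gives $\Pr[\OPT = d] \ge 1 - 1/e^{\delta}$.

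\textbf{The online upper bound.} Fix any (possibly randomized) algorithm and let $\tau$ be the first round in which it packs an item ($\tau = \infty$, profit $0$, if it never does). On $\{\tau<\infty\}$, let $A_J$ and $k_1\in[d]$ be the matrix and column of the item packed at round $\tau$, and condition on the history $\mathcal H_\tau$ — the labels and profits of the first $\tau$ arrivals together with the algorithm's coins used so far — which is enough to determine $\tau$, $J$, $k_1$, and the profit $p_{k_1}\in\{0,1\}$. Since decisions are irrevocable and $\tau$ is the first packing round, every column of $A_J$ arriving before round $\tau$ is permanently unusable, and every later-packed item must, by (i), also be a column of $A_J$; hence the set of items ever packed is contained in $\{k_1\}\cup D$, where $D$ is the set of columns of $A_J$ not yet arrived by round $\tau$, so $|D|\le d-1$. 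The key claim is that, conditioned on $\mathcal H_\tau$, the profits $(p_i)_{i\in D}$ are still i.i.d.\ $\mathrm{Bernoulli}(d^{-(\delta+1)})$: the profit vector is independent of the uniformly random arrival permutation and of the algorithm's coins, so conditioning on a prefix of the arrival sequence, its profits, and the algorithmic actions these induce leaves the suffix profits unconditioned. Therefore the algorithm's total profit is at most $p_{k_1} + |\{i\in D:p_i=1\}|$, whence
\[
\E[\text{profit}\mid\mathcal H_\tau] \le 1 + |D|\cdot d^{-(\delta+1)} \le 1 + (d-1)d^{-(\delta+1)} \le 1 + d^{-\delta},
\]
and averaging over $\mathcal H_\tau$ (and over the profit-$0$ event $\{\tau=\infty\}$) gives $\E[\text{profit}] \le 1 + 1/d^\delta$.

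\textbf{Main obstacle.} Everything except the i.i.d.-preservation claim is a short computation; that claim is the crux, since it is what prevents the algorithm from identifying the profitable matrix before it is forced to commit, and it is precisely the source of the $\Omega(d)$ gap — the algorithm banks at most one profitable column essentially for free plus an expected $O(1/d^\delta)$ beyond it, while $\OPT$ harvests all $d$. The only care required is the conditioning bookkeeping: being explicit that ``$\mathcal H_\tau$'' is a function of the first $\tau$ arrivals (labels and profits) and the coins alone, so that the profits of the not-yet-arrived items — in particular those in $D$ — are untouched by the conditioning.
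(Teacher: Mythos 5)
Your proposal is correct and follows essentially the same route as the paper: the $\OPT$ bound via the probability that some matrix is all-profitable, and the online bound by observing that after the first packed item (a column of some $A_J$) only the at most $d-1$ remaining columns of $A_J$ can ever be added, each contributing expected profit $d^{-(\delta+1)}$. Your explicit stopping-time/conditioning bookkeeping for the i.i.d.-preservation of the unseen profits is a welcome tightening of a step the paper leaves implicit, but it is the same argument.
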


\begin{proof}
Let us observe the first item that the online algorithm packs. It corresponds to a column of one matrix $A_j$. All items that correspond to columns of different matrices cannot be packed along with it. The only items that can be added to the packing are the remaining columns of $A_j$. There are less than $d$ such items left, each has an expected profit of $1/d^{\delta+1}$. Since the first item has a profit of at most $1$, the expected profit of the packing produced by the algorithm is at most $1+1/d^{\delta}$.

With regard to the optimal packing, for a given matrix $A_\ell$, the probability that all items are of profit $1$ is $1/d^{(\delta + 1)d}$, therefore, the probability that all matrices are of weight less than $d$ is
\[
\left(1-\frac{1}{d^{(\delta + 1)d}}\right)^{d^{\left(\delta + 1\right)d}\delta} \leq \frac{1}{e^{\delta}}. \qedhere
\]
\end{proof}

Note that Theorem~\ref{lowerbound thm} can be easily modified to apply to the case of two identical bins, thus, it shows that Algorithm~\ref{MKS Algorithm} for VMKP with at least two bins is also asymptotically optimal.

\section{Conclusions} 
In this paper, we presented simple, asymptotically optimal, online algorithms for multidimensional variants of the generalized assignment problem in the random-order model, which has vast implications for real-world applications, like resource allocation in cloud computing.  

Our bounds for VGAP are translated to a matching lower and upper bounds of $\Omega(m)$ and $O(m)$ for the general online packing LPs problem (as mentioned in Remark~\ref{packing lps remark}, where $m$ is the number of resources). 

For the one-dimensional case, the best lower bound for the online GAP is derived from the lower bound for the secretary problem of $e$~\cite{DBLP:journals/mor/BuchbinderJS14}. An interesting open question is to close the gap between $e$ and the upper bound of $6.99$ presented in this paper. It is also very interesting to understand whether the new theoretical algorithm provides practical value for cloud resource allocation, where the value of $d$ is a small constant ($2$ or $3$).

\bibliographystyle{plain}
\bibliography{citations}

\end{document}